\newtheorem{theorem}{Theorem}[section]
\newtheorem{lemma}[theorem]{Lemma}
\newtheorem{corollary}[theorem]{Corollary}
\newtheorem{remark}[theorem]{Remark}
\newtheorem{claim}[theorem]{Claim}
\renewcommand{\tilde}{\widetilde}
\title{\textbf{On connections between $k$-coloring and Euclidean $k$-means}}
\author{Enver Aman\vspace{0.1cm}\\ \url{enver.aman@rutgers.edu}\vspace{0.1cm}\\ Rutgers University \and Karthik C.\ S.\footnote{This work was  supported by the National Science Foundation under Grant CCF-2313372 and by the Simons  Foundation, Grant Number 825876, Awardee Thu D. Nguyen.}\vspace{0.1cm}\\\url{karthik.cs@rutgers.edu}\vspace{0.1cm}\\ Rutgers University\and Sharath Punna\footnote{This work was done as part of the master thesis at Rutgers University.}\vspace{0.1cm}\\\url{sharathcpunna@gmail.com}\vspace{0.1cm} \\ Ansys, Inc.}
\date{}
\begin{document}

\maketitle

\begin{abstract}
In the Euclidean $k$-means problems we are given as input a set of $n$ points in $\mathbb{R}^d$ and the goal is to   find a set of $k$ points $C\subseteq \mathbb{R}^d$, so as
to minimize the sum of the squared Euclidean distances from each point in $P$ to its closest
center in $C$.
In this paper, we formally explore connections between the $k$-coloring problem on graphs and the Euclidean $k$-means problem. Our results are as follows:

\begin{itemize}
    \item For all $k\ge 3$, we provide a simple reduction from the $k$-coloring problem on regular graphs to the Euclidean $k$-means problem. Moreover, our technique extends to enable a reduction from a  structured max-cut problem (which may be considered as a partial 2-coloring problem) to the Euclidean $2$-means problem. Thus, we have a simple and alternate proof of the NP-hardness of Euclidean 2-means problem. \vspace{0.1cm}
    \item In the other direction, we mimic the $O(1.7297^n)$ time algorithm of Williams [TCS'05] for the  max-cut of  problem on $n$ vertices to obtain an algorithm for the Euclidean 2-means problem with the same runtime, improving on the naive exhaustive search running in   $2^n\cdot \poly(n,d)$ time.\vspace{0.1cm}

    \item We prove similar results and connections as above for the Euclidean \textit{$k$-min-sum} problem.
\end{itemize}
\end{abstract}
  
\clearpage

\section{Introduction} 
The $k$-means problem\footnote{Throughout this paper, we consider the $k$-means problem only in the Euclidean space. } is a classic objective for modelling
clustering in a large variety of applications arising in data mining and
machine learning. Given a set of $n$ points $P\subseteq \mathbb{R}^d$,
the goal is to find a set of $k$ points $C\subseteq \mathbb{R}^d$, called \emph{centers}, so as
to minimize the sum of the squared distances from each point in $P$ to its closest
center in $C$.
The algorithmic study of the $k$-means problem arguably
started with the seminal work of Lloyd~\cite{lloyd1982least}. 
Since then, the problem has received a tremendous amount of attention \cite{Berkhin2006,WuKQGYMMNLYZSHS08}.

The $k$-means problem  is 
known to be NP-Hard, even when the points lie in the Euclidean plane (and $k$ is large)~\cite{MahajanNV12}, or even when $k=2$ (and the dimension is large)~\cite{dasgupta2009random}.
 On the positive side, a near linear time approximation scheme exists  when the  dimension is fixed  (and the number of clusters $k$ is arbitrary)~\cite{CohenAddadFS19}, or when the number of clusters is constant (and the dimension $d$ is arbitrary)~\cite{KumarSS10,BhattacharyaJK18}. 
When both $k$ and $d$ are arbitrary, 
several groups of researchers have shown hardness
of approximation results~\cite{AwasthiCKS15,LeeSW17,Cohen-AddadS19,Cohen-AddadSL22}.

\paragraph{Fine-Grained Complexity.}
One of the main research directions in Fine-Grained Complexity is to identify the exact complexity of important hard problems, distinguishing, say, between NP-complete problems where exhaustive search is essentially the best possible algorithm, and those that have improved exponential time algorithms \cite{Williams15,Williams16,Vir18}. 
The importance of high-dimensional Euclidean inputs in statistics and
machine learning applications has led researchers to study the parameterized and fine-grained complexity of the problem. Both the
dimensionality of the input, $d$, and the target number of clusters, $k$, have been studied as parameters, in as early as the
mid 90s. 

The $k$-means problems can be solved in time $O(k^n\cdot \poly(n,d))$ by simply performing an exhaustive search over the solution space.  This can be improved to $4^n\cdot \poly(n,d)$ runtime using dynamic programming \cite{jensen1969dynamic}, which itself can be further improved to $2^n\cdot \poly(n,d)$ runtime using fast max-sum convolution \cite{bjorklund2007fourier}. The seminal work of Inaba, Katoh, and Imai~\cite{InabaKI94} has shown that one can compute an exact solution to the
$k$-means problems in time $n^{kd+1}$. 
However, this algorithm clearly suffers from the so-called ``curse of dimensionality'', the higher the dimension, the
higher the running time. Thus, for high dimensions, say when $d=n$, the algorithm in \cite{InabaKI94} is slower than even the exhaustive search over the solution space. Therefore, we ask:
\begin{center}
{\textit{Can we beat $2^n$ runtime for $k$-means problem in high dimensions?\footnote{ In \cite{fomin2022exact}, the authors provide a better than $2^n$ runtime algorithm (to be precise an $ 1.89^n\poly(n,d)$ runtime algorithm) for a \emph{discrete} variant of the $k$-means problem, where the centers need to be picked from the input point-set. Their result extends to the discrete variant of the $k$-median and $k$-center problems as well. However, the discrete variant is not as natural as the continuous variant in geometric spaces. } }}
\end{center}

The complexity of $k$-means problem increases as the number of cluster $k$ increases. Therefore, if the answer to the above question is in the affirmative, then a natural first step would be to try to beat the exhaustive search algorithm for 2-means problem.

\begin{center}
{\textit{Is there an algorithm for 2-means problem\\ running in time $2^{(1-\varepsilon)n}$, for some $\varepsilon>0$?}}
\end{center}

Additionally, the case of $k=2$ for the $k$-means problem is of practical interest, for example, in medical testing to determine if a patient has certain disease or not, and industrial quality control to decide whether a specification has been met, and also in information retrieval to decide whether a page should be in the result set of a search or not.

Apriori, there is no reason to suspect that an improvement over  exhaustive search is even  possible, and one might instead be able to prove conditional lower bounds assuming  one of the popular fine-grained complexity theoretic hypothesis such as the \emph{Strong Exponential Time Hypothesis} \cite{IP01,IPZ01} or the \emph{Set Cover Conjecture} \cite{CyganDLMNOPSW16}. In fact, over the last decade, there have been a large number of conditional lower bounds proven under these two assumptions ruling out algorithms which are faster than exhaustive search (for example, see \cite{CyganDLMNOPSW16,Stephens-Davidowitz19,KrauthgamerT19,abboud2022seth,lampis}). Thus, it comes as a pleasant surprise that our main result is an affirmative answer to the above question on the 2-means problem. 

\begin{theorem}\label{thm:mainintro}
    There is an exact algorithm for the 2-means problem running in time  $1.7297^n\cdot \poly(n,d)$, where $n$ is
the number of input points\footnote{In this theorem, we assume that the coordinate entries of all points in the input are integral and that the absolute value of any coordinate is bounded by $2^{o(n)}$. }. 
\end{theorem}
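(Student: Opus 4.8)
The plan is to reduce the $2$-means problem to $O(n)$ instances of a cardinality-constrained, node-weighted version of weighted \textsc{Max-Cut} on the same $n$ vertices, and then to solve each such instance in time $O^*(2^{\omega n/3})=O^*(1.7297^n)$ by mimicking Williams' ``split-and-list plus fast matrix multiplication'' algorithm for \textsc{Max-Cut}/\textsc{Max-2-CSP}. Relative to plain \textsc{Max-Cut}, the new feature is the cardinality constraint (the two clusters have prescribed sizes), and the crux of the argument is that this can be accommodated with only a $\poly(n)$ blow-up.

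\emph{From centers to partitions.} An optimal $2$-means solution is, without loss of generality, determined by the bipartition $(S,\bar S)$ it induces on the input $P=\{p_1,\dots,p_n\}$, since the optimal center of a cluster is its centroid. By the standard identity $\sum_{i\in T}\|p_i-\mu(T)\|^2=\tfrac1{|T|}\sum_{\{i,j\}\subseteq T}\|p_i-p_j\|^2$, the cost of $(S,\bar S)$ equals $\tfrac{D_S}{|S|}+\tfrac{D_{\bar S}}{|\bar S|}$, where $d_{ij}:=\|p_i-p_j\|^2$ and $D_T:=\sum_{\{i,j\}\subseteq T}d_{ij}$. Let $x\in\{0,1\}^n$ be the indicator of $S$, put $R_i:=\sum_{j\ne i}d_{ij}$ and $W:=\sum_{i<j}d_{ij}$; using $D_{\bar S}=W-\sum_iR_ix_i+D_S$ one gets, with $a:=|S|$,
\[
a(n-a)\cdot\mathrm{cost}(S)\;=\;n\!\!\sum_{i<j}\!d_{ij}x_ix_j\;-\;a\sum_iR_ix_i\;+\;aW .
\]
Hence, for each $a\in\{1,\dots,n-1\}$, the optimal $2$-means cost among partitions with $|S|=a$ is obtained by minimizing the degree-$2$ pseudo-Boolean function $f_a(x):=n\sum_{i<j}d_{ij}x_ix_j-a\sum_iR_ix_i$ over $x\in\{0,1\}^n$ with $\sum_ix_i=a$, then dividing by $a(n-a)$ and adding $\tfrac{W}{n-a}$; the final answer is the minimum over $a$ (and the trivial value $W/n$ for a single nonempty cluster). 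Equivalently, minimizing $f_a$ is a node-weighted \textsc{Max-Cut} (edge weights $n\,d_{ij}$, node weight $(2a-n)R_i$) with the balance constraint $|S|=a$. All of $d_{ij},R_i,W$ are non-negative integers, and under the hypothesis on the coordinates they — and the values of $f_a$ — have magnitude $2^{o(n)}$.

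\emph{Solving each $f_a$.} Fix $a$. Split the $n$ points into three blocks $V_1,V_2,V_3$ of size at most $\lceil n/3\rceil$, and range over the $O(n^2)$ triples $(a_1,a_2,a_3)$ of nonnegative integers with $a_1+a_2+a_3=a$. For each triple, let $\Phi_\ell$ be the list of the (at most $N:=2^{\lceil n/3\rceil}$) assignments to $V_\ell$ that put exactly $a_\ell$ of its points in $S$. Since $f_a$ has degree $2$, on a full assignment $(\phi_1,\phi_2,\phi_3)$ it decomposes as $\sum_\ell h_\ell(\phi_\ell)+\sum_{\ell<\ell'}g_{\ell\ell'}(\phi_\ell,\phi_{\ell'})$, where $h_\ell$ collects the within-$V_\ell$ quadratic terms together with the $-a\sum_{i\in V_\ell}R_ix_i$ part and $g_{\ell\ell'}$ collects the cross terms $n\sum_{i\in V_\ell,j\in V_{\ell'}}d_{ij}x_ix_j$. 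Folding each $h_\ell$ into one incident $g_{\ell\ell'}$, minimizing $f_a$ over the restricted domain becomes: find a triangle $(\phi_1,\phi_2,\phi_3)\in\Phi_1\times\Phi_2\times\Phi_3$ of minimum total edge weight in the complete tripartite graph on $\Phi_1\cup\Phi_2\cup\Phi_3$ with edge weights $g_{\ell\ell'}$. With $|\Phi_\ell|\le N$ and integer weights of magnitude $M=2^{o(n)}$, this min-weight-triangle problem is solved in $O^*(M\cdot N^\omega)$ time: form the $\Phi_1\times\Phi_2$ and $\Phi_2\times\Phi_3$ matrices whose $(\phi,\phi')$ entry is the monomial $z^{g(\phi,\phi')}$, multiply them, and for each $(\phi_1,\phi_3)$ read off the least exponent appearing with nonzero coefficient (then add $g_{13}$ and take the overall minimum). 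Summing over the $O(n^2)$ triples and the $O(n)$ values of $a$ multiplies the running time only by $\poly(n)$, and arithmetic on $2^{o(n)}$-bit integers only by $2^{o(n)}$; since $2^{\omega/3}<1.7297$ for the current bound $\omega<2.3729$, the total is $1.7297^n\cdot\poly(n,d)$ (absorbing, when the coordinates are polynomially bounded, all sub-exponential factors; in general one obtains $1.7297^{(1+o(1))n}$).

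\emph{Main obstacle.} The delicate point is the min-weight-triangle subroutine: Williams' technique must be made to return an actual minimizer (not merely solve triangle detection) while the edge weights can be as large as $2^{o(n)}$, which is why the polynomial-weighted product above is used and why the coordinate bound enters the statement. By contrast, the cardinality constraint — which at first glance seems to reintroduce an exponential search over subsets of a fixed size — is in fact harmless: after the split-and-list it localizes to ``how many points of $S$ lie in each of the three blocks'', contributing only the $\poly(n)$ choices of $(a_1,a_2,a_3)$. The remaining ingredients (the centroid/pairwise-distance identity, the algebra isolating $f_a$, and the degree-$2$ block decomposition) are routine.
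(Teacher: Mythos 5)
Your proof is correct and, at its core, follows the same split-and-list plus fast-matrix-multiplication strategy as the paper: partition the $n$ points into three blocks, fix how many points of each cluster fall in each block (which is exactly what linearizes the $1/|C_j|$ denominators), list the $\le 2^{n/3}$ block-splits, and finish with an $N^\omega$-type product, giving $2^{\omega n/3}\cdot\poly(n,d)$. Where you genuinely diverge is in how the inner optimization is carried out. The paper keeps the two cluster sums weighted by $|B|$ and $|A|$ (your $(n-a)D_S+aD_{\bar S}$ is the same quantity), but then invokes Williams' weighted 2-CSP theorem as a black box; since that theorem decides an \emph{exact-target} problem, the paper additionally enumerates all candidate target pairs $(K_v,K_e)$ and selects, among the YES instances, one minimizing $(K_v+K_e)/(|A|\cdot|B|)$ --- which is why it assumes (for presentation) polynomially bounded coordinates. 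You instead fix $a=|S|$, clear denominators to obtain a single quadratic pseudo-Boolean $f_a$ with a cardinality constraint, and solve the resulting min-weight-triangle problem directly via a min-plus product implemented by encoding weights in polynomial exponents; this removes the target-value enumeration and handles $2^{o(n)}$-magnitude weights natively at a multiplicative $\tilde{O}(M)$ cost, which (like the paper's $2^{o(n)}$ instance count) is absorbed in the strict slack $2^{\omega/3}<1.7297$ --- so your closing hedge of $1.7297^{(1+o(1))n}$ is unnecessary and the theorem as stated follows. Two small points to add for completeness: the folded weights contain the negative term $-a\sum_i R_i x_i$, so shift each folded edge weight by a fixed known constant before exponent-encoding so that all exponents are nonnegative integers (subtracting the offset at the end); and to output an actual clustering rather than only its cost, recover $\phi_2$ by a linear scan of $\Phi_2$ once an optimal pair $(\phi_1,\phi_3)$ is identified.
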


We remark that, under the hypothesis that the matrix multiplication constant $\omega=2$, our runtime can be improved to $1.59^n\cdot \poly(n,d)$.

From a technical point-of-view, the ideas and intuition behind our algorithm provide a lot of conceptual clarifications and insights, and we elaborate more on that below.

\paragraph{Connection to Max-Cut Problem.} 
There are a lot of connections between clustering problems and graph cut problems (for example, spectral clustering \cite{von2007tutorial} or metric Max-Cut \cite{de2001randomized}). The popular graph cut problems that are motivated by geometric clustering tasks are (variants of) the min-cut problem and sparsest cut problem. Intuitively, in these two cut problems, a node corresponds to a point of a clustering problem, and an edge corresponds to similarity (i.e., proximity in the distance measure) between the corresponding points. Such a connection is presented in \cite{feige2014np} to prove the NP-hardness of $2$-median in $\ell_1$-metric.

One of the key insights, inspired by the embedding in \cite{Cohen-AddadSL21,FKKPZ23}, is a connection between the Max-Cut problem and 
the (Euclidean) 2-means problem! 
In the Max-Cut problem, we are given as input a graph and the goal is to partition the vertex set into two parts such that the number of edges across the parts is maximized. We provide an intimate connection between  Max-Cut (with some special guarantees) and 2-means by showing the following: 
\begin{itemize}
\item[(i)] In Section~\ref{sec::lowerbound}, we present a simple embedding of the vertices of a Max-Cut instance (with additional guarantees) into Euclidean space such that any clustering of the resulting pointset into two clusters minimizing the 2-means objective yields a partition of the vertices of the Max-Cut instance maximizing the number of edges across the parts. This gives an alternate view  to \cite{dasgupta2009random} on the hardness of 2-means  (see Remark~\ref{rem:proof} to know more about the technical differences).
\item[(ii)] 
In \cite{WILLIAMS2005357}, Williams designed an algorithm to solve Max-Cut in better than exhaustive search time. This algorithmic technique can be adopted for the 2-means problem to give us\footnote{For the sake of ease of presentation, we use the more generalized algorithm of Williams for Weigthed 2-CSPs given in \cite{williams2007algorithms}.} Theorem~\ref{thm:mainintro}. However, we have to take some additional care due to the geometric nature of the problem. 
\end{itemize}

In fact, the structured Max-Cut instances that we identify are quite powerful -- in Section~\ref{sec:minsum}, we show that the structured Max-Cut problem is computationally equivalent to 2-min-sum\footnote{$k$-min-sum is another classic clustering objective and is defined in Section~\ref{sec:prelim}.} problem under polynomial time reductions. This yields both  hardness of 2-min-sum and also an $1.7297^n\cdot \poly(n,d)$ time algorithm for 2-min-sum (much like Theorem~\ref{thm:mainintro}).

\paragraph{Connection to $k$-Coloring Problem.} 
Moreover, in  Section~\ref{sec:colormain} we present a general and yet \emph{simple} connection between $k$-coloring and $k$-means clustering for\footnote{It is appropriate to view Max-Cut as almost 2-coloring.} $k>2$. This yields an even simpler proof of NP-hardness of 3-means problems. Additionally, it opens up the below new research direction for future exploration:

There is an inclusion-exclusion based algorithm that runs in $2^n\cdot \poly(n,d)$ time   for the $k$-coloring problem on $n$ vertex graphs \cite{bjorklund2009set}. On the other hand, for fixed $k$, there are techniques different from William's technique \cite{WILLIAMS2005357}   to beat the $2^n\cdot \poly(n)$ time  algorithm for $k$-coloring problem \cite{beigel20053,fomin2007improved,zamir2020breaking}.

\begin{center}
    \textit{Can we use the algorithmic techniques developed for $k$-coloring\\  to obtain similar runtimes for $k$-means (for small values of $k$)? }
\end{center}

Another important fine-grained complexity question is about beating exhaustive search for other popular clustering objectives.

 \begin{center}
{\textit{Is there an algorithm for Euclidean 2-center or 2-median problem running in time $2^{(1-\varepsilon)n}$, for some $\varepsilon>0$?}}
\end{center}

\paragraph{Organization of the Paper.}
In Section~\ref{sec:prelim}, we formally introduce the problems of interest to this paper and also prove/recall some basic NP-hardness results. In Section~\ref{sec::lowerbound}, we provide a linear size blow up reduction from (a specially structured) Max-cut instance to the 2-means problem. In Section~\ref{sec:colormain}, we  provide a linear time reduction from $k$-coloring problem to the $k$-means problem. 
In Section~\ref{sec::exactAlgo}, we provide an algorithm for 2-means problem that beats exhaustive search. Finally, in Section~\ref{sec:minsum}, we prove the computational equivalence of 2-min-sum and (structured) Max-Cut problems.

\section{Preliminaries}\label{sec:prelim}

In this section, we define the clustering problems studied in this paper. We also define the graph problems and prove/recall their (known) hardness results for the sake of completeness.

\subsection{Problem Definitions}

\paragraph{$k$-means clustering problem.} The input is a set of $n$ points $P \subset \mathbb{R}^d$ and the output is a partition of points into clusters $C_1,\ldots , C_k$, along with their centers $\mu_1,\ldots , \mu_k \in \mathbb{R}^d$ such that the following objective, the \textit{$k$-means cost}, is minimized:
$$
\sum_{j=1}^k\sum_{x\in C_j}\|x - \mu_j\|^2.
$$
Here, $\|\cdot\|$ is the Euclidean distance. In the decision version of the problem, a rational $R$ will be given as part of the input, and the output is \textit{YES} if there exists a partition such that its $k$-means cost is less than or equal to $R$, and \textit{NO} otherwise. Moreover in the above formulation of the objective, it can be shown that the \textit{center} $\mu_j$ of a cluster $C_j$ is the centroid of the points in that cluster. Using this result, $\mu_j$ can be removed from the above objective function and it simplifies to minimizing: 
$$
\sum_{j=1}^k\left( \frac{1}{2|C_j|}\cdot \sum_{x,y\in C_j}\|x - y\|^2\right).
$$

\paragraph{$k$-min-sum clustering problem.} Let $(M,d)$ be a metric space. The input is a set of $n$ points $X \subseteq (M,d)$ and the output is a partition of points into clusters $C_1,\ldots , C_k$  such that the following objective, the \textit{$k$-min-sum cost}, is minimized:
$$
\sum_{j=1}^k\sum_{x,y\in C_j}d(x, y),
$$
where $d(\cdot, \cdot)$ is the metric distance measure. In the decision version, an integer $Z$ will be given as the part of the input, and the output is \textit{YES} if there exists a partition such that its $k$-min-sum cost is less than or equal to $Z$, and \textit{NO} otherwise. 

In this paper, we study the \textit{$k$-means} problem in Euclidean space, but we study the \textit{$k$-min-sum} problem in general metric.

\paragraph{Balanced Max-Cut problem.} The input is a $d$-regular graph $G=(V, E)$ and an integer $t$. For any partition of $V$ into two parts, an edge is called a \textit{bad edge} if both of its vertices are in the same part and is a \textit{good edge} otherwise. The problem is to distinguish between the following two cases:
\begin{itemize}
    \item \textit{YES instance}: There exists a balanced cut of $V$ into $V_0 \dot\cup V_1$, such that $|V_0| = |V_1|$ and the total number of bad edges is equal to $t$.
    \item \textit{NO instance}: For  every 2-partition of $V$ into $V_0 \dot\cup V_1$, the total number of bad edges is strictly greater than $t + \frac{t}{|V|}\cdot\big| |V_0| - |V_1| \big| $.
\end{itemize}

Note that this problem is slightly different from the conventional way of defining the \textit{Max-Cut} problem in two ways. The partition sets $V_0$ and $V_1$ need to be of same size in the \textit{YES} case, and we focus on the bad edges instead of the edges cut (although it is still a Max-Cut problem due to the regularity of the input graph).

\paragraph{$k$-coloring problem (on regular graphs).} The input is a $d$-regular graph $G=(V, E)$ and an integer $k$ and the output is \textit{YES}, if there exists a $k$-coloring of the vertices of the graph such that no two adjacent vertices are of the same color; else, the output is \textit{NO}.

\paragraph{NAE-3-SAT problem.} The input is a collection of $m$ clauses on $n$ variables. Each clause contains three variables or negation of variables. The output is \textit{YES} if and only if there exists an assignment of the variables such that 
all three values in every clause are not equal to each other. In other words, every clause has at least one true value and at least one is false.

 \paragraph{Weighted 2-CSP.} The input are integers $K_v,K_e \in [N^{\ell}]^+$ for a fixed integer $\ell>0$ independent of the input, a finite domain 
$D$, and functions:
$$
w_i: D \to [N^{\ell}]^+,\ \forall i = 1,\ldots ,N,\quad  w_{(i,j)}: D \times D\to [N^{\ell}]^+, \forall i,j = 1,\ldots ,N,\ \text{such that } i< j, 
$$
where $[N^{\ell}]^+:=\{0,1,\ldots ,N^{\ell}\}$. The output to the Weighted 2-CSP problem is \textit{YES} if and only if there is a variable assignment $a=(a_1,\ldots ,a_N)\in D^N$ such that, 
$$
\sum_{i\in [N]}w_i(a_i)=K_v,\quad \sum_{\substack{i,j\in [N]\\ i< j}}w_{(i,j)}(a_i,a_j)=K_e.
$$

\subsection{Computational Hardness of Graph Problems}
In this subsection, we state the NP-hardness of   \textit{Balanced Max-Cut} and \textit{$k$-coloring}.

\begin{theorem}[\cite{dailey1980uniqueness}]\label{thm:color}
$k$-coloring is NP-Hard for all $k>2$.
\end{theorem}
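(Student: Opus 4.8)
The plan is to prove the base case $k=3$ by a gadget reduction from \emph{3-SAT} and then lift it to every $k>3$ by a uniform padding argument. For $k=3$, fix the three colors $\{0,1,2\}$ and build a ``palette'' triangle on vertices $p_0,p_1,p_2$; since a triangle must receive three distinct colors we may assume $p_i$ gets color $i$, and we read $2$ as a neutral color and $0,1$ as the Boolean values \emph{false} and \emph{true}. For each variable $x$ I would introduce two vertices $v_x,v_{\bar x}$, join them to one another and join both to $p_2$; this forces $\{c(v_x),c(v_{\bar x})\}=\{0,1\}$, so every proper $3$-coloring restricted to the variable vertices encodes a truth assignment, and conversely every assignment extends to these vertices. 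For each clause I would attach the classical constant-size ``OR-gadget'' whose three input vertices are identified with the clause's literal vertices and whose single output vertex is additionally joined to $p_2$ and $p_0$; the gadget is designed so that it admits a proper $3$-coloring exactly when not all three of its inputs are colored $0$.

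Correctness then splits into the two routine directions. From a satisfying assignment, color the $v_x$'s accordingly, color the palette as above, and extend each clause gadget using the fact that a satisfied clause contains a literal colored $1$; this produces a proper $3$-coloring of the whole graph. Conversely, a proper $3$-coloring of the whole graph induces a truth assignment through the variable vertices, and the gadget property forces at least one literal of every clause to be colored $1$, so the assignment satisfies the formula. The construction is plainly computable in (linear) polynomial time, which establishes NP-hardness of $3$-coloring.

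For $k>3$, given a graph $G$ I would form $G'$ by adding $k-3$ fresh vertices, making them pairwise adjacent and each of them adjacent to every vertex of $G$. In any proper $k$-coloring of $G'$ the $k-3$ new vertices must use $k-3$ pairwise distinct colors, none of which can be reused on $V(G)$; hence $G'$ is $k$-colorable if and only if $G$ is $3$-colorable. Combined with the base case, this shows that $k$-coloring is NP-hard for every $k>2$.

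I expect the only genuinely delicate ingredient to be the explicit clause gadget in the $k=3$ reduction: one must exhibit the concrete small graph and check, by a finite case analysis, that it is $3$-colorable precisely under the ``not all inputs colored $0$'' condition. A secondary point --- relevant because the later sections of the paper work with $d$-regular inputs --- is that the graph produced above can be made regular while preserving $3$-colorability via a standard regularization gadget; this is precisely the strengthening obtained in the cited work of Dailey, and it costs only a routine, if somewhat longer, verification.
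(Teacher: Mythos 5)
Your outline of the classical chain (3-SAT $\to$ 3-coloring via a palette triangle, variable edges, and OR-gadgets, then lifting to $k>3$ by joining a $(k-3)$-clique to every vertex) is a correct proof that $k$-coloring of \emph{general} graphs is NP-hard for all $k>2$; modulo actually exhibiting the standard clause gadget, both directions and the padding argument are sound. Two contextual points, though: the paper does not prove this theorem at all (it is imported verbatim from Dailey's paper), and the statement as used here refers to the $k$-coloring problem \emph{as defined in Section 2}, whose instances are $d$-regular graphs. The downstream reduction in Section 4 sets the target cost to $nd-kd$ and uses $d$-regularity throughout its cost computations, so what the paper really needs is hardness on regular instances.

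This is where your proposal has a genuine gap. The graph you construct is far from regular (the palette vertices, the variable vertices, and the gadget vertices all have different degrees), and the $k>3$ padding makes matters worse rather than better: if the base graph has $n$ vertices, each new clique vertex has degree $n+k-4$ while an old vertex of degree $d$ ends up with degree $d+k-3$, so the join is regular only in degenerate cases. Your closing remark defers to ``a standard regularization gadget'' and to Dailey, but Dailey's result concerns $3$-colorability of ($4$-regular planar) graphs, i.e., it covers the $k=3$ case; for $k>3$ on regular graphs you would still need either a lift that preserves regularity (the clique join does not) or an explicit degree-filling construction (e.g., $k$-colorable gadgets that are $d$-regular except at a single attachment vertex), together with a check that it introduces no spurious colorings. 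None of this is deep, but it is precisely the part you wave away, and it is the part of the statement that the rest of the paper actually relies on.
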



\begin{theorem}\label{balancedMaxCut:hardness}
Balanced Max-Cut is NP-Hard.
\end{theorem}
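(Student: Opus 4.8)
The plan is to reduce from NAE-3-SAT, which is NP-hard, to the Balanced Max-Cut problem. The NAE-3-SAT instance naturally gives rise to a graph-cut flavored problem: thinking of a truth assignment as a $2$-coloring of the variables (and their negations), a clause is ``not-all-equal'' satisfied exactly when the three literals in it are not monochromatic. So first I would build a graph $G_0$ whose vertices are the $n$ variables, together with gadgetry to handle negations and to encode each clause. A standard approach: for each clause $(\ell_1 \vee \ell_2 \vee \ell_3)$ put a triangle on three literal-vertices; a $2$-coloring of a triangle never makes it monochromatic, and conversely any non-monochromatic assignment of colors to the three literals can be realized. One then needs consistency gadgets forcing a variable and its negation to receive opposite colors — this is typically done by adding many parallel paths or by a weighting trick — so that the cut structure of $G_0$ faithfully tracks NAE-assignments. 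I would count: in a YES instance (a satisfying NAE-assignment exists) the number of bad (monochromatic) edges equals some explicit target $t_0$, while in a NO instance every $2$-partition has strictly more than $t_0$ bad edges, because at least one clause-triangle must be monochromatic, contributing extra bad edges, and the gadget is designed so no ``cheating'' elsewhere can compensate.

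The two extra requirements in the Balanced Max-Cut definition — that the YES cut be \emph{balanced} ($|V_0| = |V_1|$), and that in the NO case the bad-edge count exceed $t + \frac{t}{|V|}\big||V_0|-|V_1|\big|$ even for unbalanced cuts — need additional work. For balance, the standard device is to take two disjoint copies of the graph $G_0$, one of which is ``color-flipped'': any assignment to $G_0$ and its mirror copy, glued appropriately, automatically uses each color on exactly half the vertices, and the bad-edge count is simply doubled. Alternatively, one pads with an independent set or a large clique/bipartite gadget whose optimal $2$-coloring is forced to be balanced and contributes a fixed, known number of bad edges. I would choose whichever keeps the arithmetic cleanest (the doubling trick is usually simplest and preserves regularity if done with care about degrees, possibly adding a perfect matching between the two copies). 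The NO-case slack term $\frac{t}{|V|}\big||V_0|-|V_1|\big|$ is there precisely so the reduction is robust to unbalanced cuts: I would argue that moving $s$ vertices from the balanced configuration changes the number of bad edges by at least roughly $\frac{t}{|V|}\cdot 2s$ (a discrete isoperimetric / edge-expansion estimate on the padded graph), which dominates this linear slack, so an unbalanced near-optimal cut still has too many bad edges to fool us.

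The graph must also be made $d$-regular for a single $d$, since the problem definition demands a $d$-regular input. This is a routine but fiddly degree-balancing step: add dummy vertices and edges, or take a tensor/replacement product with a suitable regular gadget, ensuring the added structure contributes a predictable number of bad edges under any $2$-coloring and does not interfere with either the balance or the counting. I would fold the regularization into the construction from the start rather than as an afterthought, because the bad-edge target $t$ depends on every gadget's contribution.

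I expect the \textbf{main obstacle} to be the NO-case analysis with the slack term, i.e., proving the isoperimetric-type bound that guarantees every sufficiently-unbalanced cut has strictly more than $t + \frac{t}{|V|}\big||V_0|-|V_1|\big|$ bad edges, \emph{simultaneously} with the requirement that every balanced non-satisfying cut also exceeds $t$. Getting both to hold forces the clause/consistency gadgets to have enough edge-expansion, which in turn may require amplifying the gadgets (many parallel copies) and recomputing $t$; balancing all these constraints against the $d$-regularity requirement is where the care goes. Everything else — the clause triangles, the negation consistency, the doubling for balance — is standard gadget engineering once the parameters are pinned down.
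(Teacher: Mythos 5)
Your high-level template (clause triangles, negation-consistency gadgets, bad-edge counting) is the same family of argument the paper uses, but the proposal leaves unresolved exactly the point you flag as the main obstacle, and the specific devices you suggest would not deliver it. The NO-case requirement demands that \emph{every} cut have more than $t + \frac{t}{|V|}\big||V_0|-|V_1|\big|$ bad edges, so you need a gadget whose bad-edge count grows by strictly more than $t/|V|$ per unit of imbalance. A mirrored copy glued by a perfect matching does not force balance at all (you may place a vertex and its mirror on the same side at the cost of a single matching edge), and it yields at most one extra bad edge per two units of imbalance, i.e.\ a penalty of $1/2$ per unit --- which must be compared against $t/|V|$, a ratio you have not controlled and which in the natural parameterization exceeds $1/2$. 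Padding with an independent set contributes no penalty whatsoever, and parallel paths or edge weights are not available since the target problem is an unweighted simple $d$-regular graph. Likewise, the $d$-regularization is not a routine afterthought here: dummy-vertex patches change $|V|$ and hence the slack coefficient $t/|V|$, so the arithmetic you defer is precisely where the proof lives.

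The paper resolves all of these simultaneously with two choices you do not make. First, it reduces from \emph{Linear 4-Regular} NAE-3-SAT (NP-hard), takes $4$ copies of each literal, and puts clause triangles on both a clause's literals and their negations; linearity makes the graph simple and $4$-regularity makes it exactly $12$-regular with no padding, and it fixes $t=8m$, $|V|=8n$, so $t/|V| = m/n = 4/3$. Second, the negation-consistency gadget is a $K_{4,4}$ between the four copies of $x_i$ and the four copies of $\overline{x_i}$: it contributes zero bad edges under a consistent assignment (so the YES cut is automatically balanced, no doubling needed), and any cut of $K_{4,4}$ has at least $2\big||V_0'|-|V_1'|\big|$ bad edges, which sums to the global bound $\beta(V_0,V_1) \ge 8m + 2\big||V_0|-|V_1|\big|$. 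Since $2 > 4/3$, this single lemma kills every unbalanced cut in the NO case and reduces the balanced case to recovering an NAE-satisfying assignment from $a_{j,k}=2$, $b_i=0$. Without an analogue of this quantitative gadget lemma and without control of $t/|V|$, your outline does not yet constitute a proof.
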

The proof of the above theorem is a reduction from an NP-hard structured variant of NAE-3-SAT problem and is deferred to Appendix~\ref{sec:maxcut}.

\section{Reduction from Balanced Max-Cut to 2-means}\label{sec::lowerbound}
In this section, we give a (linear size blow up) reduction from \textit{Balanced Max-Cut} to \textit{2-means} problem. 

\paragraph{Construction.} Let  $(G=(V,E),t)$ be the input to an instance of the Balanced Max-Cut problem where $G$ is a $d$-regular graph. Let $n:= |V|$ and $m:= |E|$. We build $n$ points in $\mathbb{R}^m$ and set the task of checking if the 2-means cost is equal to $nd - 2d + 4t/n$. Arbitrarily orient the edges of the graph $G$. For every $v\in V$, we have a point $p_v \in  \mathbb{R}^m$ in the point-set where $\forall e \in [m]$,
$$
p_v(e) = 
\begin{cases}
+1 &\text{if $e$ is outgoing from $v$}, \\
-1 &\text{if $e$ is incoming to $v$}, \\
\phantom{+}0 &\text{otherwise}.
\end{cases}
$$

\begin{claim}\label{claim:struc}
For any two-partition of $V$:= $V_1\dot\cup V_2$, let the number of bad edges in each part be $r_1, r_2$ respectively. Then, the \textit{2-means} cost of the corresponding 2-clustering $C_1, C_2$ is given by $nd - 2d + 2 \sum_{j=1}^2 \frac{r_j}{|V_j|}$.
\end{claim}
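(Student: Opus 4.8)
The plan is to work with the centroid-free form of the $2$-means objective recalled in the preliminaries, namely $\mathrm{cost}(C_1,C_2)=\sum_{j=1}^2\frac{1}{2|C_j|}\sum_{x,y\in C_j}\|x-y\|^2$ (the inner sum running over all ordered pairs), and to reduce the whole computation to the pairwise squared distances $\|p_u-p_v\|^2$. The first step is a one-line calculation of these distances. Since $G$ is $d$-regular, each $p_v$ has exactly $d$ nonzero coordinates, all equal to $\pm1$, so $\|p_v\|^2=d$. For the inner product $\langle p_u,p_v\rangle$, a coordinate $e$ contributes a nonzero term only if $e$ is incident to both $u$ and $v$, i.e.\ $e=\{u,v\}\in E$; in that case one of $u,v$ is the tail of $e$ and the other is the head, so $p_u(e)p_v(e)=(+1)\cdot(-1)=-1$, regardless of which way the edge was oriented. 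Hence $\langle p_u,p_v\rangle=-1$ when $u\sim v$ and $0$ otherwise, and therefore $\|p_u-p_v\|^2=\|p_u\|^2+\|p_v\|^2-2\langle p_u,p_v\rangle$, which equals $2d+2$ for adjacent $u,v$ and $2d$ for non-adjacent $u,v$.

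The second step is just counting. Fix a part $j$ and write $n_j:=|V_j|=|C_j|$. Among the $\binom{n_j}{2}$ unordered pairs of distinct vertices inside $V_j$, exactly $r_j$ are adjacent (these are precisely the bad edges lying in part $j$) and the remaining $\binom{n_j}{2}-r_j$ are non-adjacent, while the diagonal pairs contribute $0$. Summing the distances from the previous step over all ordered pairs in $C_j$ thus gives $\sum_{x,y\in C_j}\|x-y\|^2=2\big[(2d+2)r_j+2d(\binom{n_j}{2}-r_j)\big]=2d\,n_j(n_j-1)+4r_j$. Dividing by $2|C_j|=2n_j$ yields the contribution of cluster $j$ to the objective, namely $d(n_j-1)+\frac{2r_j}{n_j}$. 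Adding the two clusters and using $n_1+n_2=n$ gives $\mathrm{cost}(C_1,C_2)=d(n-2)+2\sum_{j=1}^2\frac{r_j}{|V_j|}=nd-2d+2\sum_{j=1}^2\frac{r_j}{|V_j|}$, which is the claimed formula.

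No step here is genuinely difficult; the only places that need a little care are checking that the (arbitrary) orientation of the edges really cancels out of the inner-product computation, and keeping the ordered-versus-unordered pair convention consistent with the factor $\frac{1}{2|C_j|}$. (If one allows $G$ to be a multigraph, the same argument goes through with $r_j$ counting edges with multiplicity, since $\langle p_u,p_v\rangle$ then equals minus the number of $u$--$v$ edges.) It is also worth noting that this claim is exactly what calibrates the target value in the construction: a balanced \emph{YES} instance has $|V_1|=|V_2|=n/2$ and $r_1+r_2=t$, so its optimal $2$-clustering has cost precisely $nd-2d+4t/n$.
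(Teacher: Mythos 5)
Your proof is correct, but it takes a genuinely different route from the paper. The paper works with the center-based form of the objective: it computes the centroid $c_j$ of each cluster coordinate by coordinate (a good edge contributes $\pm\tfrac{1}{|C_j|}$ to the relevant coordinate, a bad edge contributes $0$), then splits the cost $\sum_{p_v\in C_j}\|p_v-c_j\|^2$ edge-coordinate by edge-coordinate into good-edge and bad-edge contributions ($1-\tfrac{1}{|C_j|}$ and $2$ per edge, respectively) and counts $|C_j|d-2r_j$ good and $r_j$ bad edges per part. You instead invoke the centroid-free pairwise formulation $\sum_j\frac{1}{2|C_j|}\sum_{x,y\in C_j}\|x-y\|^2$ from the preliminaries, observe that $\|p_u-p_v\|^2$ equals $2d+2$ or $2d$ according to adjacency (with the orientation visibly cancelling in $\langle p_u,p_v\rangle=-1$), and then the claim is pure pair counting; your arithmetic checks out and reproduces $nd-2d+2\sum_j r_j/|V_j|$. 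Your route is shorter and makes the irrelevance of the arbitrary edge orientation transparent, and it is structurally the same computation the paper later performs for the $2$-min-sum reduction (Claim~\ref{claim:minsum-cost}), so it unifies the two embeddings; it also extends painlessly to multigraphs, as you note. What the paper's center-based computation buys in exchange is an explicit description of the optimal centers and of how each individual coordinate of the embedding behaves, which is the same machinery reused verbatim in the $k$-coloring reduction of Section~\ref{sec:colormain} for general $k$, where the per-coordinate view is the natural one.
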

\begin{proof} Fix $j\in \{1,2\}$. 
Let us begin by computing the center of a cluster $C_j$. Observe that a good edge $e$ contributes a $\pm \frac{1}{|C_j|}$ additive factor to the center on the $e^{th}$ coordinate, whereas a bad edge contributes nothing. So, the center of a cluster $C_j$ is given by the following expression:
$$
c_j(e) = \frac{1}{|C_j|}\cdot \sum_{p_v \in C_j} p_v(e) =
\begin{cases}
    \frac{1}{|C_j|} &\text{if $e$ is good edge and $\exists p_v\in C_j$ s.t. $p_v(e) = 1$},\\
    -\frac{1}{|C_j|} &\text{if $e$ is good edge and $\exists p_v\in C_j$ s.t. $p_v(e) = -1$},\\
    0 &\text{otherwise}.
\end{cases}
$$

 For any $v\in V_j$, the cost contributed by a good edge $e$ is given by:
$$
|p_v(e) - c_j(e)| = 
\begin{cases}
    1 - \frac{1}{|C_j|} &\text{when $v\in e$ and $c_j(e) \neq 0$},\\
    \frac{1}{|C_j|} &\text{when $v\notin e$ and $c_j(e) \neq 0$},\\
    0 &\text{otherwise}.
\end{cases}
$$

The cost contributed by a bad edge $e$ is given by
$$
|p_v(e) - c_j(e)| = 
\begin{cases}
    1 &\text{when $|p_v(e)| = 1$},\\
    0 &\text{otherwise}.
\end{cases}
$$

We rewrite the \textit{2-means} cost in terms of cost contributed by good and bad edges as follows:
\begin{gather*}
    \sum_{j=1}^2 \sum_{p_v\in C_j} \| p_v - c_j\|^2 = \sum_{j=1}^2 \sum_{p_v\in C_j} \sum_{e\in E} |p_v(e) - c_j(e)|^2 
    = \sum_{j=1}^2 \sum_{e\in E} \sum_{p_v\in C_j} |p_v(e) - c_j(e)|^2\\ 
= \sum_{j=1}^2 \left[ \sum_{\text{$e$ is good}} \sum_{p_v\in C_j} |p_v(e) - c_j(e)|^2 + \sum_{\text{$e$ is bad}} \sum_{p_v\in C_j} |p_v(e) - c_j(e)|^2 \right]
\end{gather*}

 The cost contributed by a bad edge in a cluster $C_j$ is given by: $\sum_{p_v\in C_j} |p_v(e) - c_j(e)|^2 = 1\cdot2 + (|C_j| - 2) \cdot 0= 2$, and there are $r_j$ many such bad edges. On the other hand, the cost contributed by a good edge on all vertices in the cluster $C_j$ is $\sum_{p_v\in C_j} |p_v(e) - c(e)|^2= (1 - \frac{1}{|C_j|})^2 + (\frac{1}{|C_j|})^2\cdot (|C_j| - 1)) = (1 - \frac{1}{|C_j|})$, and there are $|C_j|d - 2r_j$ many good edges in each part. Putting it together, the total \textit{2-means} cost:
\begin{align*} 
\sum_{j=1}^2 \left[ 2r_j + (|C_j|d - 2r_j)\cdot  \left(1 - \frac{1}{|C_j|}\right) \right] &= nd - 2d + 2 \sum_{j=1}^2 \frac{r_j}{|C_j|}.
\end{align*}
\end{proof}

\paragraph{Completeness.}
In an \textit{YES instance}, $|C_1| = |C_2| = \frac{n}{2}$ and $r_1 + r_2 = t$. So, from Claim~\ref{claim:struc}, the \textit{2-means} cost is: 
$$
nd - 2d + 2 \frac{r_1 + r_2}{n/2} = nd - 2d + 4t/n.
$$

\paragraph{Soundness.}
In a \textit{NO instance}, $|C_1| + |C_2| = n$ and $r_1 + r_2 > t + \frac{t}{n}\cdot (|C_1| - |C_2|)$.

Assume, for the sake of contradiction, there exists a clustering $C_1, C_2$ such that its \textit{2-means} cost is $\leq nd - 2d + 4t/n$. From Claim~\ref{claim:struc}, this implies:

\begin{equation}\label{equation::soundness2-means}
 \frac{r_1}{|C_1|} + \frac{r_2}{|C_2|} \leq \frac{2t}{n}.
\end{equation}

Without loss of generality let us suppose that $|C_1|\ge |C_2|$, and thus let $c:=|C_1|$ and $\delta:=|C_1|-|C_2|\ge 0$. We may assume that $\delta>0$, because otherwise, we immediately arrive at a contradiction as the soundness assumption tells us that $r_1+r_2>t$ and \eqref{equation::soundness2-means} implies that $r_1+r_2\le t$.

We can now rewrite \eqref{equation::soundness2-means} as follows:
$$ n\cdot(c-\delta)\cdot (r_1+r_2)  +\delta n r_2\le c\cdot (c-\delta)\cdot 2t.$$
Combining the above with the soundness assumption that  $r_1+r_2> t\left(1+\delta/n\right) $ and that $n=2c-\delta$, we obtain:
$$  \delta n r_2<  (c-\delta)\cdot t \cdot (2c-n-\delta)=0.$$

Since both $\delta$ and $n$ are positive, this implies $r_2<0$, which is a contradiction. 
This completes the soundness analysis of the reduction.

\begin{remark}\label{rem:proof}
    The starting point of the proof of NP-hardness of 2-means given in \cite{dasgupta2009random} is also NAE-3-SAT (much like the starting point of the NP-hardness proof idea of Balanced Max-Cut). However, in \cite{dasgupta2009random}, the authors directly construct the distance matrix of the input points from the NAE-3-SAT instance and then argue that the distance matrix can indeed be realized in $\ell_2^2$. On the other hand, our proof sheds new light by identifying a clean graph theoretic intermediate problem, namely the Balanced Max-Cut problem, which then admits a very simple embedding to the Euclidean space.  
\end{remark}

\section{Reduction from $k$-Coloring to $k$-means} \label{sec:colormain}
In this section, we present a (linear size blow up) reduction from \textit{$k$-coloring} to \textit{$k$-means} problem. Invoking Theorem~\ref{thm:color} then gives an alternate proof of NP-hardness for the \textit{$k$-means} problem when $k\ge 3$.

\paragraph{Construction.} Let $(G=(V,E),k)$ be the input to an instance of the $k$-coloring problem where $G$ is a $d$-regular graph. Let $n:= |V|$ and $m:= |E|$. We build the point set $P$ of $n$ points in $\mathbb{R}^m$ and set the $k$-means cost equal to $nd - kd$. Arbitrarily orient the edges of the graph $G$. For every $v \in V$, we have a point $p_v \in \mathbb{R}^m$ where $\forall e \in [m]$,
\[
    p_v(e) = 
    \begin{cases}
        +1 &\text{if $e$ is outgoing from $v$}, \\
        -1 &\text{if $e$ is incoming to $v$}, \\
        0 &\text{otherwise}.
    \end{cases}
\]

\paragraph{Completeness.}
If $G$ is $k$-colorable, then $V$ can be partitioned into $V_1, V_2, \dots, V_k$ such that each $V_j$ is an independent set. Consider the clusters of points $C_1, C_2, \dots, C_k$ such that $C_j = \{p_v: v \in V_j\}$. Observe that the center $c_j$ of a cluster $C_j$ is given by
\[
    \forall e\in[m],\ c_j(e) = 
    \begin{cases}
        \frac{1}{|C_j|} &\text{if $\exists p_v \in C_j: p_v(e) = +1$}, \\
        -\frac{1}{|C_j|} &\text{if $\exists p_v \in C_j: p_v(e) = -1$}, \\
        0 &\text{otherwise}.
    \end{cases}
\]
This is because at most one vertex of an edge can be present in a cluster, this is from the definition of an independent set. Moreover, note that there are exactly $|C_j|d$ non-zero entries in each $c_j$ (recall that $G$ is $d$-regular). Let us compute the cost contributed by a point $p_v \in C_j$:
\[
    \forall e\in[m],\  |p_v(e) - c_j(e)| = 
    \begin{cases}
        1 - \frac{1}{|C_j|} &\text{if $v \in e$},\\
        \frac{1}{|C_j|} &\text{if $c_j(e) \neq 0$ and $v\notin e$},\\
        0 &\text{when $c_j(e) = 0$}.
    \end{cases}
\]
In the above cost contribution, for fixed $v$, the first case happens on $d$ coordinates and the second case on $(|C_j| - 1)d$ coordinates, owing to the $d$-regularity of $G$. The \textit{$k$-means} cost contributed by one point is then
\[
    \|p_v - c_j\|^2 = d\left(1 - \frac{1}{|C_j|}\right)^2 + d(|C_j| - 1)\left(\frac{1}{|C_j|}\right)^2 = \left(1 - \frac{1}{|C_j|}\right)d.
\]
The total \textit{$k$-means} cost by all the points is given by:
\[
    \sum_{j=1}^k \sum_{p_v \in C_j} \|p_v - c_j\|^2 = \sum_{j=1}^k |C_j|\cdot d\left(1 - \frac{1}{|C_j|}\right) = nd - kd.
\]
This shows that the \textit{YES} instance of \textit{$k$-coloring} to reduced to a \textit{YES} instance of \textit{$k$-means}. 

\paragraph{Soundness.}
When $G$ is not $k$-colorable, any $k$-partition of $V$ contains at least one edge with both its vertices in the same part. Let us call such edges \textit{bad edges}, and the rest of the edges (the edges with its vertices in different parts) as \textit{good edges}. Consider any $k$-partition of $V = V_1\dot\cup V_2\dot\cup \cdots\dot\cup V_k$, and the corresponding clusters $C_1, C_2, \dots, C_k$. Let the number of bad edges in each part be $r_1, r_2, \dots, r_k$ respectively. Since it is a \textit{NO} instance, $r_1 + r_2 + \dots + r_k > 0$.

The center $c_j$ of each cluster $C_j$ is given by
\[
    c_j(e) = 
    \begin{cases}
        \frac{1}{|C_j|} &\text{if $e$ is good edge and $\exists p_v\in C_j$ s.t. $p_v(e) = 1$}, \\
        -\frac{1}{|C_j|} &\text{if $e$ is good edge and $\exists p_v\in C_j$ s.t. $p_v(e) = -1$}, \\
        0 &\text{otherwise}.
    \end{cases}
\]
The cost contributed by a good edge is
\[
    |p_v(e) - c_j(e)| = 
    \begin{cases}
        1 - \frac{1}{|C_j|} &\text{when $v\in e$ and $c_j(e) \neq 0$}, \\
        \frac{1}{|C_j|} &\text{when $v\notin e$ and $c_j(e) \neq 0$}, \\
        0 &\text{otherwise},
    \end{cases}
\]
and the cost contributed by a bad edge is
\[
    |p_v(e) - c_j(e)| =
    \begin{cases}
        1 &\text{when $|p_v(e)| = 1$}, \\
        0 &\text{otherwise}.
    \end{cases}
\]

Hence, the \textit{$k$-means} cost is given by
\begin{align*}
    \sum_{j=1}^k \sum_{p_v\in C_j} \|p_v - c_j\|^2 &= \sum_{j=1}^k \sum_{p_v\in C_j} \sum_{e\in E} |p_v(e) - c_j(e)|^2 = \sum_{j=1}^k \sum_{e\in E} \sum_{p_v\in C_j} |p_v(e) - c_j(e)|^2, \\ 
    &= \sum_{j=1}^k \biggl(\sum_{\text{good $e$}} \sum_{p_v\in C_j} |p_v(e) - c_j(e)|^2 + \sum_{\text{bad $e$}} \sum_{p_v\in C_j} |p_v(e) - c_j(e)|^2\biggr)
\end{align*}
The cost contributed by each bad edge on the vertices in the same part is 2, and there are $r_j$ many such bad edges. The cost contributed by all good edges on all vertices in the same part is $\left(1 - \frac{1}{|C_j|}\right)^2\cdot 1 + \left(\frac{1}{|C_j|}\right)^2\cdot \left(|C_j| - 1\right) = \left(1 - \frac{1}{|C_j|}\right)$, and there are $|C_j|d - 2r_j$ many good edges in each part. So, the total $k$-means cost is
\[
    \sum_{j=1}^k \left[2r_j + (|C_j|d - 2r_j)\cdot  \left(1 - \frac{1}{|C_j|}\right)\right] = nd - kd + 2 \sum_{j=1}^k \frac{r_j}{|C_j|} \geq nd - kd + 2\sum_{j=1}^k \frac{r_j}{n} > nd -kd
\]
This shows that any partitioning of the $k$-coloring instance would result in the $k$-means cost strictly greater than $nd - kd$. This completes the soundness case.
\section[An O(1.7297n) runtime Algorithm for 2-means]{An $1.7297^n\cdot \poly(n,d)$ runtime Algorithm for 2-means}\label{sec::exactAlgo}

In this section, we discuss the algorithm that beats exhaustive search for the \textit{2-means} problem. The key idea of the algorithm is to reduce the given \textit{2-means} instance to a \textit{Weighted 2-CSP} instance (see Section~\ref{sec:prelim} for it's definition). Then we use the matrix multiplication based fast algorithm for the  Weighted 2-CSP problem\footnote{This is a generalization of the max-cut problem \cite{williams2007algorithms}.} to  beat the $2^n$ time bound for 2-means problem. Formally, we prove the following theorem.

\begin{theorem}\label{thm:exact_algo}
There is an exact algorithm for the \textit{2-means} problem running in time $2^{\omega n/3}  \cdot \poly(n,d)$, where $n$ is the number of input points, $d$ is the dimensionality of the Euclidean space, and $\omega$ is the matrix multiplication constant.
\end{theorem}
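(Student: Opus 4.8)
The plan is to reduce a given 2-means instance on $n$ points $P=\{p_1,\dots,p_n\}\subseteq\mathbb{R}^d$ to a Weighted 2-CSP instance on $N=n$ variables over the binary domain $D=\{0,1\}$, and then invoke the matrix-multiplication based algorithm of Williams for Weighted 2-CSP, which runs in time $|D|^{\omega N/3}\cdot\poly(N)$; with $|D|=2$ this yields $2^{\omega n/3}\cdot\poly(n,d)$. A binary assignment $a\in\{0,1\}^n$ encodes the partition $C_0=\{p_i:a_i=0\}$, $C_1=\{p_i:a_i=1\}$. The first step is to pick the objective form of 2-means that is a sum of pairwise terms: using the centroid-eliminated formula from Section~\ref{sec:prelim}, the cost is $\sum_{b\in\{0,1\}}\frac{1}{2|C_b|}\sum_{p,q\in C_b}\|p-q\|^2$. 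The obstacle is that $\frac{1}{2|C_b|}$ is not a fixed coefficient — it depends on the global sizes of the parts — whereas a 2-CSP objective is a plain sum of unary and binary weights with no normalization. So I would first guess the split sizes: there are only $n+1$ choices for $(|C_0|,|C_1|)=(s,n-s)$, so I run the whole reduction $n+1$ times, once for each $s$, and in the run for size $s$ I add a (large-weight) gadget forcing exactly $s$ variables to be $0$.

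With $s$ fixed, the normalizing constants $\frac{1}{2s}$ and $\frac{1}{2(n-s)}$ become known rationals, and the cost of a valid (size-$s$) assignment equals $\frac{1}{2s}\sum_{i<j}[a_i=a_j=0]\|p_i-p_j\|^2+\frac{1}{2(n-s)}\sum_{i<j}[a_i=a_j=1]\|p_i-p_j\|^2$. This is exactly a sum of binary weights $w_{(i,j)}(a_i,a_j)$: set $w_{(i,j)}(0,0)=\frac{1}{2s}\|p_i-p_j\|^2$, $w_{(i,j)}(1,1)=\frac{1}{2(n-s)}\|p_i-p_j\|^2$, and $w_{(i,j)}(0,1)=w_{(i,j)}(1,0)=0$. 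To force the balance constraint $|C_0|=s$, I add unary weights (say $w_i(0)=0$, $w_i(1)=1$) and, using the second equation of the Weighted 2-CSP definition with the target $K_v$, require $\sum_i w_i(a_i)=n-s$; alternatively fold a large penalty into the binary terms. The decision version of 2-means asks whether the minimum cost is $\le R$; since Williams's algorithm (in the "=" form of the definition given here) tests a specific target value, I would either binary-search over the polynomially-many distinct attainable cost values, or — cleaner — use the variant of the algorithm that returns the minimum-cost assignment (Williams's algorithm naturally enumerates, via the rectangular matrix product, the best triple over a tripartition of the variables, so it gives the optimum, not just a fixed value).

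Three technical points need care, none of them deep. First, integrality and magnitude: the squared distances $\|p_i-p_j\|^2$ are integers (coordinates are integers, per the hypothesis) bounded by $2^{o(n)}$, but dividing by $2s$ or $2(n-s)$ produces rationals; I clear denominators by multiplying the entire objective for split $s$ by $2s(n-s)$, so all weights become integers of bit-length $\poly(n)+o(n)$, which keeps them in $[N^{\ell}]^+$ for a suitable fixed $\ell$ and a suitably scaled $N$ — the arithmetic in the matrix-multiplication step is therefore over $\poly(n)$-bit numbers, costing only an extra $\poly(n,d)$ factor. Second, the balance gadget's penalty must strictly dominate any cost difference, so I choose it larger than $2s(n-s)\cdot n^2\cdot\max_{i,j}\|p_i-p_j\|^2$, still of bit-length $\poly(n)+o(n)$. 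Third, we run over all $n+1$ values of $s$ and take the best, adding only a factor $n+1$. Summing up, the total running time is $(n+1)\cdot 2^{\omega n/3}\cdot\poly(n,d)=2^{\omega n/3}\cdot\poly(n,d)$, and plugging the current bound $\omega<2.3728596$ gives $2^{\omega n/3}<1.7297^n$, which is Theorem~\ref{thm:mainintro}; under $\omega=2$ it gives $2^{2n/3}<1.59^n$. The main obstacle, as flagged above, is the size-dependent normalization in the 2-means objective, which the "guess the split size plus balance gadget" trick resolves; everything else is bookkeeping about integrality and weight magnitudes.
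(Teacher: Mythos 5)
Your proposal is correct and follows essentially the same route as the paper: reduce the 2-means instance to Williams' Weighted 2-CSP framework, handle the size-dependent normalization $\frac{1}{|C_j|}$ by guessing cluster cardinalities and clearing denominators to obtain integer weights, and enumerate the exact target values (polynomially many under the stated integrality assumptions) before invoking the $|D|^{\omega N/3}$-time algorithm. The only cosmetic differences are that the paper pre-splits the points into three groups and hands Williams a 3-variable instance over a domain of size $3\cdot 2^{n/3}$ (precisely what the split-and-list step would do internally to your $n$-variable binary instance), guessing the six within-group part sizes rather than your single size $s$; do note that the exact-equality form of the CSP calls for enumerating targets rather than binary search, and that comparing outcomes across different $s$ requires dividing out the scale factor $2s(n-s)$, exactly as the paper divides $K_v+K_e$ by $|P^A|\cdot|P^B|$.
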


\begin{corollary}\label{corollary::exactalgorithm}
There is an exact algorithm for the \textit{2-means} problem running in time ${O}(1.7297^n)$.
\end{corollary}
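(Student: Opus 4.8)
The plan is to derive Corollary~\ref{corollary::exactalgorithm} as an immediate consequence of Theorem~\ref{thm:exact_algo} by plugging in a concrete upper bound on the matrix multiplication exponent $\omega$. Since the best known bound at the time of writing is roughly $\omega < 2.3728596$ (Alman--Williams), the running time $2^{\omega n/3}\cdot\poly(n,d)$ becomes $2^{(\omega/3)n}\cdot\poly(n,d)$ with $\omega/3 < 0.790954$. It then suffices to check that $2^{0.790954} < 1.7297$, since absorbing the $\poly(n,d)$ factor into a slightly larger base (any base strictly above $2^{\omega/3}$ dominates it for all large $n$, and we have slack) gives the $O(1.7297^n)$ bound.

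So the key steps, in order, are: first, invoke Theorem~\ref{thm:exact_algo} to get a running time of $2^{\omega n/3}\cdot\poly(n,d)$; second, substitute the current record $\omega \le 2.372\ldots$; third, compute $2^{\omega/3} \le 2^{0.7910\ldots}$ and verify numerically that this quantity is strictly less than $1.7297$ (indeed $2^{0.791} \approx 1.7297$, and with the strict inequality $\omega/3 < 0.791$ we get a value slightly below $1.7297$); fourth, note that because the base in the exponential dominates any fixed polynomial factor, $2^{\omega n/3}\cdot \poly(n,d) = O(1.7297^n)$. There is essentially no obstacle here — the only ``care'' needed is to make sure the chosen constant $1.7297$ is genuinely above $2^{\omega/3}$ for the cited value of $\omega$, so that the polynomial factor can be swallowed; this is a one-line numerical check.

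I expect the main (very minor) subtlety to be bookkeeping about the polynomial factor: strictly speaking $2^{\omega n/3}\cdot \poly(n,d)$ is not literally $O((2^{\omega/3})^n)$ unless one allows the hidden constant to depend on the polynomial, or one rounds the base up. Since $2^{\omega/3}$ for $\omega \approx 2.3729$ is strictly less than $1.7297$, there is a positive gap, and for $n$ larger than some constant the polynomial factor $\poly(n,d)$ (which in the worst relevant regime $d = \poly(n)$ is still just polynomial in $n$) is dominated by $(1.7297/2^{\omega/3})^n$; finitely many small $n$ are absorbed into the big-$O$ constant. Hence the corollary follows. One could also remark, as the paper does in the introduction, that under the hypothesis $\omega = 2$ the same argument gives $2^{2n/3}\cdot\poly(n,d) = O(1.59^n)$, since $2^{2/3} \approx 1.5874 < 1.59$.
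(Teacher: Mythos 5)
Your route is the same as the paper's: substitute a concrete upper bound on $\omega$ into the $2^{\omega n/3}\cdot\poly(n,d)$ bound of Theorem~\ref{thm:exact_algo}. However, your numerical verification fails for the value of $\omega$ you chose. With the Alman--Williams bound $\omega \le 2.3728596$ one gets $\omega/3 \approx 0.790953$ and $(\omega/3)\ln 2 \approx 0.548247$, whereas $\ln 1.7297 \approx 0.547949$; hence $2^{\omega/3} \approx 1.7302 > 1.7297$. Your key claim that $2^{0.791}\approx 1.7297$ is off in the critical direction (in fact $2^{0.791}\approx 1.7303$), so as written your argument only yields a bound of roughly $O(1.7303^n)$, not the stated $O(1.7297^n)$. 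The paper instead plugs in the newer bound $\omega \le 2.371552$ of \cite{williams2024new}; with that value $(\omega/3)\ln 2 \approx 0.547945 < \ln 1.7297 \approx 0.547949$, so $2^{\omega/3} \approx 1.72969 < 1.7297$ and the corollary follows exactly along the lines you describe. The gap is therefore not conceptual, but it is real: the constant $1.7297$ is chosen tightly against the $2024$ bound on $\omega$, and the $2021$ bound does not suffice to reach it.

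On the secondary point about the polynomial factor: your absorption argument is fine in spirit (given $d=n^{O(1)}$, any fixed polynomial is eventually dominated by $c^n$ for any $c>1$, and after correcting $\omega$ there is a positive, if tiny, gap $1.7297/2^{\omega/3}>1$), and it is in fact slightly more careful than the paper, whose own proof of the corollary simply states the bound as $1.7297^n\cdot\poly(n,d)$, as in Theorem~\ref{thm:mainintro}.
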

\begin{proof}
 By   using the best known value of matrix multiplication constant, $\omega = 2.371552$ \cite{williams2024new}, we upper bound the time complexity in Theorem~\ref{thm:exact_algo}  by $1.7297^n\cdot \poly(n,d)$.
\end{proof}

To prove Theorem~\ref{thm:exact_algo}, we need the well-known algorithm of Williams \cite{williams2007algorithms} which in turns relies on the algorithm of Ne{\v{s}}et{\v{r}}il  and Poljak, \cite{N1985} to quickly detect a $k$-clique in a graph via Matrix Multiplication.

\begin{theorem}[Theorem 6.4.1 in \cite{williams2007algorithms}\footnote{We set $k(n)$ to be the constant function always equal to 3 in Theorem 6.4.1 of \cite{williams2007algorithms}. Additionally, the definition of weighted 2-CSP in \cite{williams2007algorithms} distinguishes the constraint $(i,j)$ from the constraint $(j,i)$, for all $i,j\in [N]$, $i\neq j$. In this paper, we work with the simpler $\{i,j\}$ instead.}]
Weighted  2-CSP instances with weights in $[N^{\ell}]^+$ are solvable in $$ N^{O(\ell)} + 27\cdot |D|^{\omega N/3}\text{ time},$$
where $N$ is the number of variables,   $D$ is the domain, and $\omega$ is the matrix multiplication constant.\label{ryan}
\end{theorem}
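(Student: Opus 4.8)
I would prove this by the Ne{\v{s}}et{\v{r}}il--Poljak ``split-and-detect-a-triangle'' paradigm, as adapted by Williams, trading the $|D|^{N}$ brute-force search for a triangle search in a tripartite configuration graph of side length $|D|^{N/3}$ and then solving that search with fast matrix multiplication. First I would partition the variable set $[N]$ into three blocks $B_1,B_2,B_3$ with $\big||B_t|-N/3\big|\le 1$, and for each $B_t$ enumerate all $|D|^{|B_t|}$ partial assignments; these become the vertices of the $t$-th side of a complete tripartite graph $H$. To a partial assignment $\alpha$ on $B_t$ I attach its internal vertex weight $\mathrm{int}(\alpha):=\sum_{i\in B_t}w_i(\alpha_i)$ and internal edge weight $\mathrm{eint}(\alpha):=\sum_{i<j\in B_t}w_{\{i,j\}}(\alpha_i,\alpha_j)$, and to a cross pair $(\alpha,\beta)$ with $\alpha$ on $B_s$, $\beta$ on $B_t$, $s\neq t$, the cross weight $\mathrm{cr}(\alpha,\beta):=\sum_{i\in B_s,\,j\in B_t}w_{\{i,j\}}(\alpha_i,\alpha_j)$. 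Then full assignments correspond exactly to triangles $(\alpha_1,\alpha_2,\alpha_3)$ of $H$, with total vertex weight $\sum_t\mathrm{int}(\alpha_t)$ and total edge weight $\sum_t\mathrm{eint}(\alpha_t)+\sum_{s<t}\mathrm{cr}(\alpha_s,\alpha_t)$; the $2$-CSP is a YES instance iff some triangle has total vertex weight $K_v$ and total edge weight $K_e$.

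Next I would collapse the two linear constraints into one and fold vertex weights into edges. Fix a base $B:=N^{\ell+2}+1$, strictly larger than any attainable edge-weight sum. Replace each edge weight of a cross pair by $\mathrm{cr}'(\alpha,\beta):=2\,\mathrm{cr}(\alpha,\beta)+\big(B\,\mathrm{int}(\alpha)+\mathrm{eint}(\alpha)\big)+\big(B\,\mathrm{int}(\beta)+\mathrm{eint}(\beta)\big)$; the doubling keeps everything integral while distributing each vertex's ``$B\cdot(\text{vertex weight})+(\text{internal edge weight})$'' half-and-half onto its two incident triangle edges. By the choice of $B$, the three edge weights of a triangle sum to exactly $T:=2(BK_v+K_e)$ iff that triangle realizes both original constraints. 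So the task reduces to: does the edge-weighted complete tripartite graph $H$ contain a triangle whose three edge weights sum to exactly $T$? All weights are bounded by $N^{O(\ell)}$, and building the three weight matrices of size $|D|^{N/3}\times |D|^{N/3}$ (each entry computable in $\poly(N)$ time) costs $N^{O(\ell)}\cdot |D|^{2N/3}$, which is absorbed into the additive $N^{O(\ell)}$ term since $\omega\ge 2$.

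For the exact-weight triangle detection I would use matrix multiplication: with $M:=|D|^{N/3}$, for each admissible ordered pair of weights $(u,v)$ form the $M\times M$ Boolean matrices $A_u$ (indicator of side-$1$--side-$2$ edges of weight $u$) and $A'_v$ (indicator of side-$2$--side-$3$ edges of weight $v$); then $(A_uA'_v)[\alpha_1,\alpha_3]>0$ together with the side-$1$--side-$3$ edge $\{\alpha_1,\alpha_3\}$ having weight $T-u-v$ certifies a good triangle, and the answer is YES iff some choice succeeds. Each product costs $O(M^{\omega})=O(|D|^{\omega N/3})$ via fast (rectangular/square) matrix multiplication, and the number of essentially distinct such products that arise is a constant once $k=3$ is fixed; a careful accounting over the three cyclic choices of which side plays the ``middle'' role makes this constant $27$. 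Combining the enumeration and weight bookkeeping (the additive $N^{O(\ell)}$) with these matrix multiplications yields the stated bound $N^{O(\ell)}+27\cdot|D|^{\omega N/3}$.

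The main obstacle is making the exact-weight requirement cost only a constant, rather than a $\poly(N^{\ell})$, factor in the $|D|^{\omega N/3}$ term: a naive sweep over all weight decompositions $T=u+v+w$ would multiply $|D|^{\omega N/3}$ by $N^{\Theta(\ell)}$. This is handled by (i) the base-$B$ trick of the previous paragraph, which merges the $K_v$ and $K_e$ conditions into a single target so that only one-dimensional weight sweeps are ever needed, and (ii) organizing the computation so that the weight sweep is confined to the additive preprocessing term -- for instance by computing, in a single matrix multiplication over $\mathbb{Z}[x]$ truncated at degree $T$, the generating polynomial of the attainable middle-edge sums for every pair $(\alpha_1,\alpha_3)$ and then reading off the coefficient of $x^{T}$ against the third edge weight -- and then verifying that the residual overhead is bounded by the constant $27$ in the $k=3$ case. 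Getting this bookkeeping tight, and in particular checking that folding vertex weights into edges and truncating the polynomial ring do not inflate either the constant or the $O(\ell)$ in the exponent, is where the real work lies.
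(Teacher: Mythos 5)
A preliminary remark: the paper contains no proof of this statement --- it is imported, with only notational adjustments (see the footnote), from Theorem 6.4.1 of Williams' thesis --- so your proposal can only be judged as a reconstruction of Williams' argument. At the level of the paradigm you have the right idea, and it is the same one the cited proof is based on: split the $N$ variables into three blocks, take partial assignments as the vertices of a tripartite graph with sides of size $|D|^{N/3}$, identify full assignments with triangles, and detect the desired triangle via fast matrix multiplication (Ne\v{s}et\v{r}il--Poljak). Your base-$B$ merge of the two targets is also correct as algebra: since the total edge weight of any assignment and $K_e$ are both strictly below $B$, the three modified edge weights of a triangle sum to $T=2(BK_v+K_e)$ exactly when both original constraints hold.

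The genuine gap is at the step you yourself flag as ``where the real work lies'': making the exact-weight condition cost only a constant factor (the $27$) on $|D|^{\omega N/3}$, and neither of your devices delivers this. Sweeping the admissible pairs $(u,v)$ takes $N^{\Theta(\ell)}$ Boolean products, as you note; and the fallback --- a single product of matrices with entries in $\mathbb{Z}[x]$ truncated at degree $T$ --- does \emph{not} confine the weight dependence to the additive preprocessing term, because each ring operation on polynomials of degree up to $T=\Theta(BK_v)=N^{\Theta(\ell)}$ costs $\tilde{O}(T)$, so that one product costs $N^{\Theta(\ell)}\cdot|D|^{\omega N/3}$: the overhead sits multiplicatively on the exponential term, which is exactly what you needed to avoid. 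The sentence ``a careful accounting over the three cyclic choices of which side plays the middle role makes this constant $27$'' asserts, rather than proves, the theorem's hardest quantitative content; the $27$ in the cited statement is most naturally $3^{\omega}\le 3^3$ coming from a constant number of multiplications of the adjacency matrix of the full $3|D|^{N/3}$-vertex graph, and your reconstruction never explains how the exact targets are absorbed at that price (with your merged target, exact-weight triangle detection by known techniques costs $\tilde{O}(T)\cdot M^{\omega}$ with $M=|D|^{N/3}$, not $O(M^{\omega})$). A similar, smaller looseness appears when you absorb the $N^{O(\ell)}\cdot|D|^{2N/3}$ construction cost into the additive term, which requires $\omega>2$ strictly. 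What your argument does establish is a bound of the form $N^{O(\ell)}\cdot|D|^{\omega N/3}$ --- which, incidentally, would still suffice for the way this paper invokes the theorem in Section~\ref{sec::exactAlgo}, where $N=3$ --- but it is not a proof of the statement as written.
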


We are now ready to prove Theorem~\ref{thm:exact_algo}. For the sake of presentation, we will assume that the coordinate entries of all points in the input are integral and that the absolute value of any coordinate is bounded by $\poly(n,d)$ (although our claims would go through even if the absolute value of any coordinate is bounded above by $2^{o(n)}$). Moreover, we will assume that $d=n^{O(1)}$.

\begin{proof}[Proof of Theorem~\ref{thm:exact_algo}]

  For the ease of presentation, assume $n$ is divisible by $3$. Let $P$ be the set of given input points to the 2-means instance (where $|P|=n$). 
Under the assumption that the input is integral, let the largest absolute value of any coordinate appearing in $P$ be $M$. Then, the squared Euclidean distance between any two points $x, y \in P$ is bounded by
\[
    \|x - y\|^2 = \sum_{i=1}^d |x_i - y_i|^2 \leq \sum_{i=1}^d (2M)^2 = 4M^2d.
\]
Consequently, we obtain that, for any partition of $P$ to $A$ and $B$, we have:
\begin{align*}
     \left(|B|\cdot \sum_{x,y\in A}\|x-y\|^2\right)+\left(|A|\cdot \sum_{x,y\in B}\|x-y\|^2\right) \leq \left(|A|\cdot|B|^2+|B|\cdot|A|^2\right)\cdot 4M^2d \le 8M^2n^3d  <n^{\ell},
\end{align*}
for some fixed constant $\ell$ (for large enough $n$).

Arbitrarily partition $P$ into $3$ sets $P_1, P_2,$ and $P_3$ with $n/3$ points in each set. 
For every $\textbf{K}:=(K_v,K_e)\in \mathbb{Z}_{\ge 0}^2$ such that $K_v + K_e \le n^{\ell}$, and $\textbf{ab}:=(a_1,a_2,a_3,b_1,b_2,b_3)\in\mathbb{Z}_{\ge 0}^6$, such that $a_1+a_2+a_3+b_1+b_2+b_3=n$, we construct an instance $\Phi_{\textbf{ab,K}}$ of Weighted 2-CSP on 3 variables $v_1,v_2,$ and $v_3$ as follows.

For every $i\in [3]$, let $D_i:=\{(P_i^A,P_i^B):\  P_i^A\cup P_i^B=P_i,\ P_i^A\cap P_i^B=\emptyset,\  |P_i^A|=a_i,\ |P_i^B|=b_i\}$. Let $D:=D_1\cup D_2\cup D_3$ be the domain of $\Phi_{\textbf{ab,K}}$. Note that $|D|\le 3\cdot 2^{n/3}$. It remains to define the weight functions $w_1,w_2,w_3,w_{(1,2)},w_{(2,3)},w_{(1,3)}$ to complete the construction of $\Phi_{\textbf{ab,K}}$. 

For every $i\in [3]$ and every $(P_i^A,P_i^B)\in D_i$, we define: 
$$w_i((P_i^A,P_i^B)):=(b_1+b_2+b_3)\cdot \left(\sum_{x,y\in P_i^A}\|x-y\|^2\right)+(a_1+a_2+a_3)\cdot \left(\sum_{x,y\in P_i^B}\|x-y\|^2\right).$$
For every $a\in D\setminus D_i$, we define $w_i(a):=K_v+1$. 

Next, for every $i,j\in [3]$ such that $i<j$, and every $(P_i^A,P_i^B)\in D_i$ and $(P_j^A,P_j^B)\in D_j$, we define: 
$$w_{(i,j)}((P_i^A,P_i^B),(P_j^A,P_j^B)):=(b_1+b_2+b_3)\cdot \left(\underset{\substack{x\in P_i^A\\ y\in P_j^A}}{\sum}\|x-y\|^2\right)+(a_1+a_2+a_3)\cdot \left(\underset{\substack{x\in P_i^B\\ y\in P_j^B}}{\sum}\|x-y\|^2\right).$$
For every $(a,a')\in D\times D\setminus D_i\times D_j$, we define $w_{(i,j)}(a,a'):=K_e+1$. 

Recall, that the output to $\Phi_{\textbf{ab,K}}$ must be \textit{YES} if and only if $$
\sum_{i\in [3]}w_i(a_i)=K_v,\quad \sum_{\substack{i,j\in [3]\\ i< j}}w_{(i,j)}(a_i,a_j)=K_e.
$$
Also note that the number of weighted 2-CSP instances we constructed is $\poly(n)$, and the construction time of each instance is at most $\tilde{O}(D^2)=2^{2n/3}\cdot \poly(n,d)$. 
  
We run the algorithm in Theorem~\ref{ryan} on all the above constructed Weighted 2-CSP instances, and let $\Phi_{\textbf{ab, K}}^*$ be an instance whose output is \textit{YES}, and for which $\frac{K_e+K_v}{(a_1+a_2+a_3)\cdot (b_1+b_2+b_3)}$ is minimized (amongst all instances for which the algorithm outputted \textit{YES}; it is easy to observe that there is at least one instance whose output by the algorithm is \textit{YES}). 

Let $((P_1^A,P_1^B),(P_2^A,P_2^B),(P_3^A,P_3^B))$ be the variable assignment to $\Phi_{\textbf{ab, K}}^*$ which satisfies all the constraints. Then we claim that the two clusters $P^A:=P_1^A\cup P_2^A\cup P_3^A$ and $P^B:=P_1^B\cup P_2^B\cup P_3^B$ minimizes the 2-means objective for $P$. 
We prove this by contradiction as follows.

Suppose (one of) the minimizers of the 2-means objective for $P$ is given by the clustering $Q^A\dot\cup Q^B=P$. For the sake of contradiction, we assume that the 2-means cost of $Q^A\dot\cup Q^B$ is strictly less than the 2-means cost of 
$P^A\dot\cup P^B$, i.e., 
\begin{align}
&\left(\frac{1}{|Q^A|}\cdot \sum_{x,y\in Q^A}\|x-y\|^2\right) + \left(\frac{1}{|Q^B|}\cdot \sum_{x,y\in Q^B}\|x-y\|^2\right)\nonumber \\ &\phantom{sfhsjdkfhkdf}< \left(\frac{1}{|P^A|}\cdot \sum_{x,y\in P^A}\|x-y\|^2\right) + \left(\frac{1}{|P^B|}\cdot \sum_{x,y\in P^B}\|x-y\|^2\right)
    \label{bound}
\end{align}

For every $i\in[3]$, let $Q_i^A:=P_i\cap Q^A$ and $Q_i^B:=P_1\cap Q^B$. Let $\tilde{\textbf{ab}}:=(|Q_1^A|,|Q_2^A|,|Q_3^A|,|Q_1^B|,|Q_2^B|,|Q_3^B|)$ and let $\tilde{\textbf{K}}:=(\tilde{K}_v,\tilde{K}_e)$ where $\tilde{K}_v$ and $\tilde{K}_e$ are defined as follows:
$$\tilde{K}_v:=\sum_{i\in [3]}\left(|Q^B|\cdot \left(\sum_{x,y\in Q_i^A}\|x-y\|^2\right)+|Q^A|\cdot \left(\sum_{x,y\in Q_i^B}\|x-y\|^2\right)\right),$$
$$\tilde{K}_e:=\sum_{\substack{i,j\in [3]\\ i< j}}\left(|Q^B|\cdot \left(\underset{\substack{x\in Q_i^A\\ y\in Q_j^A}}{\sum}\|x-y\|^2\right)+|Q^A|\cdot \left(\underset{\substack{x\in Q_i^B\\ y\in Q_j^B}}{\sum}\|x-y\|^2\right)\right).$$

By construction, we have that $((Q_1^A,Q_1^B),(Q_2^A,Q_2^B),(Q_3^A,Q_3^B))$ is an assignment to $\Phi_{\mathbf{\tilde{ab}}, \mathbf{\tilde K}}$ that satisfies all the constraints. 

However from \eqref{bound}, note that:  
\begin{align*}
    \frac{K_e+K_v}{|P^A|\cdot |P^B|} >  \frac{\tilde{\textbf{K}}_e+\tilde{\textbf{K}}_v}{|Q^A|\cdot |Q^B|},
\end{align*} leading to a contradiction. 
\end{proof}

\section{Fine-Grained Complexity of 2-min-sum}\label{sec:minsum}

In this section, we study the fine-complexity of the \textit{2-min-sum} problem in general $\ell_p$-metric spaces. 
\begin{theorem}
The \textit{2-min-sum} and \textit{Balanced Max-Cut} problems are computationally equivalent in $\ell_p$-metrics: 
\begin{itemize}
    \item Given a \textit{2-min-sum} instance, there is a polytime algorithm to reduce it to a weighted \textit{Max-Cut} instance.
    \item Given a \textit{Balanced Max-Cut} instance, there is a polytime algorithm to reduce it to a \textit{2-min-sum} instance.
\end{itemize}
\end{theorem}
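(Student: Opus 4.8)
The plan is to give two polynomial-time reductions, one in each direction: the direction ``2-min-sum $\Rightarrow$ weighted Max-Cut'' is essentially an observation, while ``Balanced Max-Cut $\Rightarrow$ 2-min-sum'' reuses the signed-incidence point set of Section~\ref{sec::lowerbound}, now measured in an honest $\ell_p$ metric rather than with squared distances. For the first direction, given a 2-min-sum instance $X=\{x_1,\dots,x_n\}\subseteq(\ell_p,d)$ I would output the weighted complete graph on vertex set $X$ with edge weight $w(\{x,y\}):=d(x,y)\ge 0$. For any bipartition $X=C_1\dcup C_2$, splitting the sum of all pairwise distances into within-cluster and cross-cluster parts gives
\[
\sum_{j=1}^{2}\ \sum_{x,y\in C_j}d(x,y)\;=\;\Big(\sum_{x,y\in X}d(x,y)\Big)\;-\;2\!\!\sum_{\substack{x\in C_1\\ y\in C_2}}\!\! d(x,y),
\]
so minimising the 2-min-sum objective is exactly maximising the cut weight, with $\mathrm{OPT}_{\text{2-min-sum}}(X)=\sum_{x,y\in X}d(x,y)-2\cdot\mathrm{MaxCut}(K_n,w)$ and coinciding optimisers. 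Since weighted Max-Cut is standardly inter-reducible with unweighted (Balanced) Max-Cut, this direction both closes the equivalence and lets one transport Williams-type algorithms to 2-min-sum.

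For the second direction I would take verbatim the construction of Section~\ref{sec::lowerbound}: orient $G=(V,E)$ arbitrarily and map $v\mapsto p_v\in\mathbb{R}^m$ with $p_v(e)\in\{+1,-1,0\}$ according to whether $e$ leaves $v$, enters $v$, or misses $v$, but now regard $\{p_v\}$ as a point set in $\ell_2$ (and, for general $p>1$, in $\ell_p$). A direct computation gives $\|p_u-p_v\|_2=\sqrt{2d}$ if $\{u,v\}\notin E$ and $\sqrt{2d+2}$ if $\{u,v\}\in E$ --- the extra $2$ comes from the unique shared coordinate, on which the two points have opposite signs. Hence, for the clustering induced by a partition $V=V_1\dcup V_2$ with $|V_j|=c_j$ and $b_j$ edges internal to part $j$, writing $\alpha=\sqrt{2d}$, $\beta=\sqrt{2d+2}-\sqrt{2d}>0$ and $\delta=|c_1-c_2|$, the 2-min-sum cost equals (up to the harmless overall factor from counting ordered pairs)
\[
\alpha\Big(\binom{c_1}{2}+\binom{c_2}{2}\Big)+\beta(b_1+b_2)\;=\;Z_0\;+\;\tfrac{\alpha}{4}\,\delta^2\;+\;\beta\big((b_1+b_2)-t\big),\qquad Z_0:=\tfrac{\alpha}{4}(n^2-2n)+\beta t,
\]
using $\binom{c_1}{2}+\binom{c_2}{2}=\tfrac14(n^2-2n+\delta^2)$. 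In a \textsc{Yes} instance a balanced cut with exactly $t$ bad edges realises $\delta=0$, $b_1+b_2=t$, so the cost is $Z_0$; in a \textsc{No} instance every cut obeys $b_1+b_2>t+\tfrac tn\delta$, so $\tfrac\alpha4\delta^2\ge0$ and $\beta((b_1+b_2)-t)>0$, forcing cost $>Z_0$. Taking $Z_0$ as the 2-min-sum threshold therefore yields the reduction (and, combined with Theorem~\ref{balancedMaxCut:hardness}, the NP-hardness of 2-min-sum).

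The step that requires real care --- and the one I expect to be the main obstacle --- is that these $\ell_2$ (or $\ell_p$, $p>1$) distances are irrational while the decision version of 2-min-sum demands an integer threshold. I would resolve this by scaling all coordinates by a large integer $N=\mathrm{poly}(n,d)$ and rounding the induced distances to integers, checking that the total rounding error over the $\le\binom n2$ pairs stays below the \textsc{Yes}/\textsc{No} cost gap, which is $\ge\min(\beta,\alpha/4)=\Omega(1/\sqrt d)$ before scaling and hence $\gg n^2$ afterwards; alternatively, running the same construction in $\ell_\infty$ makes the distances exactly $1$ and $2$ and avoids rounding entirely. A separate point worth flagging is that $p=1$ is genuinely different: any signed-incidence embedding of a $d$-regular graph into $\ell_1$ collapses to the uniform metric $\|p_u-p_v\|_1\equiv 2d$, which records only cut sizes, so $\ell_1$ would need a different $0/1$-coordinate embedding, or the claim should be stated for $p>1$.
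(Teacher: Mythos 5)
Your proposal is correct and takes essentially the same route as the paper: the identical complete-graph, distance-as-edge-weight reduction for the 2-min-sum $\rightarrow$ weighted Max-Cut direction, and the same signed-incidence embedding for the converse (the paper computes the $\ell_p$ distances as $(2d)^{1/p}$ vs.\ $(2d+2^p-2)^{1/p}$ and bounds $|V_1|^2+|V_2|^2\ge n^2/2$, which is just an equivalent form of your $\tfrac{\alpha}{4}\delta^2$ rearrangement). Your two flags are legitimate and in fact apply to the paper's own write-up: the target cost $\alpha_{n,p,d,t}$ is irrational while the decision version asks for an integer threshold (your scaling/rounding or gap argument fixes this, since the YES/NO gap is at least $\beta=\Theta(d^{1/p-1})$), and at $p=1$ the adjacent and non-adjacent distances coincide at $2d$, so the bad-edge coefficient vanishes and the soundness argument implicitly requires $p>1$.
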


\subsection{Reduction from Balanced Max-Cut to 2-min-sum}\label{minsumlowerbound}
In this subsection, we give a reduction from \textit{Balanced Max-Cut} as defined in Section~\ref{sec:prelim} (and thus the NP-hardness in Theorem~\ref{balancedMaxCut:hardness} applies to 2-min-sum as well).

\paragraph{Reduction to \textit{2-min-sum} in $\ell_p$-metric.}
Let $G = (V,E)$ and $t > 0$ be an instance to the \textit{Balanced Max-Cut} problem. Let $n = |V|$ and $m = |E|$. We build $n$ points in $\mathbb{R}^m$ and set the \textit{2-min-sum} cost equal to
\[
    \alpha_{n,p,d,t} := (2d)^{1/p}\left(\frac{n^2-2n}{4}\right) +  t [(2d + 2^p -2)^{1/p} - (2d)^{1/p}].
\]
Arbitrarily orient the edges of the graph $G$. For every $v\in V$, we define a point $p_v \in  \mathbb{R}^m$ in the point-set where $\forall e \in [m]$,
\[
    p_v(e) = 
    \begin{cases}
        +1 &\text{if $e$ is outgoing from $v$}, \\
        -1 &\text{if $e$ is incoming to $v$}, \\
        0 &\text{otherwise}.
    \end{cases}
\]

\begin{claim} \label{claim:minsum-cost}
For any two-partition of $V = V_1 \dot\cup V_2$, let the number of bad edges in each part be $r_1, r_2$ respectively. Then, the \textit{2-min-sum} cost of the corresponding 2-clustering is
\[
    \frac{(2d)^{1/p}}{2} (|V_1|^2 + |V_2|^2 - n) + (r_1 + r_2)[(2d + 2^p -2)^{1/p} - (2d)^{1/p}].
\]

\end{claim}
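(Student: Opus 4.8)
The plan is to compute the $\ell_p$ $2$-min-sum cost by summing, over each pair of points in the same cluster, the $\ell_p$ distance $\|p_u - p_v\|_p$, and to observe that this distance depends only on how the edge structure of $G$ interacts with the partition. The crucial elementary fact is that for any two vertices $u, v$, the vector $p_u - p_v$ has entries in $\{-2,-1,0,1,2\}$: a coordinate $e$ is $\pm 2$ exactly when $e$ is an edge between $u$ and $v$ (the orientation contributes $+1$ at one endpoint and $-1$ at the other, or vice versa, giving a $\pm 2$ difference), it is $\pm 1$ when exactly one of $u,v$ is an endpoint of $e$ (and the other is not), and it is $0$ otherwise. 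Since $G$ is $d$-regular, each of $p_u$ and $p_v$ has exactly $d$ nonzero coordinates; if $u$ and $v$ are non-adjacent these supports are disjoint, so $p_u - p_v$ has exactly $2d$ coordinates equal to $\pm 1$ and the rest $0$, giving $\|p_u - p_v\|_p = (2d)^{1/p}$. If $u$ and $v$ are adjacent, the one shared edge-coordinate becomes $\pm 2$ while $2d - 2$ coordinates remain $\pm 1$, giving $\|p_u - p_v\|_p = (2d + 2^p - 2)^{1/p}$.

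Next I would assemble the total cost. Fix $j \in \{1,2\}$ and consider the contribution of cluster $C_j$ (the point-set for $V_j$). The number of unordered pairs in $C_j$ is $\binom{|V_j|}{2} = \tfrac12(|V_j|^2 - |V_j|)$. Of these pairs, exactly $r_j$ are adjacent in $G$ (these are precisely the bad edges inside $V_j$), and the remaining $\tfrac12(|V_j|^2 - |V_j|) - r_j$ are non-adjacent. So the contribution of $C_j$ to the $2$-min-sum objective is
\[
    \left(\tfrac12(|V_j|^2 - |V_j|) - r_j\right)(2d)^{1/p} + r_j \,(2d + 2^p - 2)^{1/p}.
\]
Summing over $j = 1, 2$ and using $|V_1| + |V_2| = n$, the terms $-|V_1| - |V_2|$ combine to $-n$, and the total becomes
\[
    \frac{(2d)^{1/p}}{2}\left(|V_1|^2 + |V_2|^2 - n\right) - (r_1 + r_2)(2d)^{1/p} + (r_1 + r_2)(2d + 2^p - 2)^{1/p},
\]
which after grouping the last two terms is exactly the claimed expression
\[
    \frac{(2d)^{1/p}}{2}(|V_1|^2 + |V_2|^2 - n) + (r_1 + r_2)\left[(2d + 2^p - 2)^{1/p} - (2d)^{1/p}\right].
\]

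The only genuinely delicate point — and the step I would write out most carefully — is the case analysis for $\|p_u - p_v\|_p$ when $u$ and $v$ are adjacent: one must check that there is exactly one coordinate (the shared edge $e = \{u,v\}$) where the difference is $\pm 2$, that $d$-regularity forces precisely $d-1$ further coordinates of $p_u$ and $d-1$ of $p_v$ to be nonzero and pairwise disjoint from each other and from $e$ (so they contribute $2(d-1)$ coordinates equal to $\pm 1$), and hence the total nonzero coordinate count is $2d - 1$, with $\ell_p^p$-norm $2^p + (2d-2)$. Everything else is bookkeeping: once the per-pair distances are pinned down, the claim follows by a direct count of adjacent versus non-adjacent pairs within each cluster. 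This computation is the direct analogue of Claim~\ref{claim:struc}, with squared Euclidean distance replaced by $\ell_p$ distance and with centroids no longer entering the picture since the $2$-min-sum objective is defined directly in terms of pairwise distances.
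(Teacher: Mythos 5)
Your proposal is correct and follows essentially the same route as the paper: compute the two possible pairwise $\ell_p$ distances, $(2d+2^p-2)^{1/p}$ for adjacent vertices and $(2d)^{1/p}$ for non-adjacent ones, then count adjacent (bad-edge) versus non-adjacent pairs within each cluster, $r_j$ and $\binom{|V_j|}{2}-r_j$ respectively, and sum over $j=1,2$. The careful case analysis you flag for adjacent pairs (one coordinate of difference $\pm 2$ plus $2(d-1)$ coordinates of difference $\pm 1$, by $d$-regularity) is exactly the calculation the paper carries out.
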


\begin{proof}
Let us begin by computing the distance between any two points $p_u$ and $p_v$. Each of the points $p_u$ and $p_v$ have $d$ non-zero entries, since $G$ is $d$-regular. Since we are working in the $\ell_p$ metric, we remark that
\[
    \|p_u - p_v\|_p = \left(\sum_{e \in E} |p_u(e) - p_v(e)|^p\right)^{1/p}.
\]
If there is an edge $e_0$ from $u$ to $v$ in $G$, then $|p_u(e_0) - p_v(e_0)| = 2$ and $|p_u(e) - p_v(e)| = 1$ for exactly $2(d-1)$ other edges $e \in E \setminus \{e_0\}$. So, the distance between $p_u$ and $p_v$ in this case is $[2(d-1)\cdot 1^p + 2^p]^{1/p} = (2d - 2 + 2^p)^{1/p}$.

On the other hand, if there is no edge between $u$ and $v$, then $|p_u(e) - p_v(e)| = 1$ for exactly $2d$ edges $e \in E$. Putting it together, we get: 
\[
    \|p_u - p_v\|_p = 
    \begin{cases}
        (2d - 2 + 2^p)^{1/p} &\text{if $\{u, v\} \in E$},\\
        (2d)^{1/p} &\text{otherwise}.
    \end{cases}
\]

The \textit{2-min-sum} cost can be computed as follows:
\begin{align*}
\sum_{j=1}^2 \sum_{p_u, p_v\in C_j} \|p_u - p_v\|_p &= \sum_{j=1}^2 \left(\binom{|V_j|}{2} - r_j\right) \cdot (2d)^{1/p} + r_j \cdot (2d + 2^p -2)^{1/p}, \\
&= \sum_{j=1}^2 \frac{|V_j|(|V_j| - 1)}{2}(2d)^{1/p} + r_j ((2d + 2^p - 2)^{1/p} - (2d)^{1/p}), \\
&= \frac{(2d)^{1/p}}{2} (|V_1|^2 + |V_2|^2 - n) + (r_1 + r_2)[(2d + 2^p -2)^{1/p} - (2d)^{1/p}].
\end{align*}
\end{proof}

\paragraph{Completeness.}
In a \textit{YES} instance, $|V_1| = |V_2| = \frac{n}{2}$ and $r_1 + r_2 = t$. So, from Claim~\ref{claim:minsum-cost}, the \textit{2-min-sum} cost is
\[
    (2d)^{1/p}\left(\frac{n^2-2n}{4}\right) +  t [(2d + 2^p -2)^{1/p} - (2d)^{1/p}] = \alpha_{n,p,d,t}.
\]

\paragraph{Soundness.}
In a \textit{NO} instance, $|V_1| + |V_2| = n$ and $r_1 + r_2 > t$. Observe that $|V_1|^2 + |V_2|^2 \geq n^2/2$. So the \textit{2-min-sum} cost is strictly greater than $(2d)^{1/p}(\frac{n^2-2n}{4}) +  t [(2d + 2^p -2)^{1/p} - (2d)^{1/p}] = \alpha_{n,p,d,t}$.


\subsection{An $1.7297^n\cdot \poly(n,d)$ runtime Algorithm for 2-min-sum}\label{minsumalgo}
Instead of giving an algorithm to \textit{2-min-sum} problem, we will give a reduction from the problem to the well-known \textit{Max-Cut} problem with linear blowup. Then, we will the use the algorithm by Williams \cite{WILLIAMS2005357} that runs in time $1.7297^n\cdot \poly(n,d)$, where $n$ is the number of points.

Let us begin by recalling the well-known NP-Hard problem, the \textit{Max-Cut} problem.

\paragraph{Max-Cut problem:} The input is a weighted graph $G=(V, E, w)$, and the output is a partition of the vertices of $G$ into $V_1, V_2$ such that the sum of weights of edges with one vertex in $V_1$ and the other in $V_2$ is maximized. 

\begin{lemma}
There is a polynomial-time linear-blowup reduction from \textit{2-min-sum} problem to \textit{Max-Cut} problem. 
\end{lemma}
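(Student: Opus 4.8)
The plan is to reduce \textit{2-min-sum} on $n$ points to a \textit{Max-Cut} instance on $O(n)$ vertices in polynomial time, so that an optimal cut encodes an optimal 2-clustering. First I would recall that, since the centers are irrelevant for min-sum, the \textit{2-min-sum} objective of a partition $P = A \dcup B$ is simply $\sum_{x,y\in A} d(x,y) + \sum_{x,y\in B} d(x,y)$. The obvious difficulty is that the natural ``cut'' encoding of a bipartition measures the \emph{cross} term $\sum_{x\in A, y\in B} d(x,y)$, whereas min-sum asks us to \emph{minimize} the \emph{within} terms. Let $T := \sum_{\{x,y\}} d(x,y)$ be the total of all pairwise distances (computable in polynomial time). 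Then the within-cost equals $T$ minus the cross-cost, so minimizing the 2-min-sum cost is exactly equivalent to \emph{maximizing} the cross-cost $\sum_{x\in A, y\in B} d(x,y)$. Hence the reduction is: build a complete graph on vertex set $P$, put weight $w(\{x,y\}) = d(x,y)$ on each edge, and output this weighted \textit{Max-Cut} instance (together with the threshold $T - Z$ if one wants the decision version, where $Z$ is the 2-min-sum target).

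The key steps, in order, are: (i) state the reduction map $P \mapsto (P, \binom{P}{2}, w)$ with $w(\{x,y\}) = d(x,y)$, and note it is computable in time polynomial in $n$ and the bit-complexity of the distances, with exactly $n$ vertices (linear blowup); (ii) prove the cost identity: for every bipartition $A \dcup B = P$,
\[
  \sum_{x,y\in A} d(x,y) + \sum_{x,y\in B} d(x,y) \;=\; T \;-\; \sum_{x\in A,\, y\in B} d(x,y),
\]
which follows by splitting the sum $\sum_{\{x,y\}\subseteq P} d(x,y)$ according to whether both endpoints lie in $A$, both in $B$, or one in each; (iii) conclude that a bipartition minimizes the 2-min-sum cost if and only if it maximizes the \textit{Max-Cut} weight, so an exact \textit{Max-Cut} solver on the output instance yields an exact \textit{2-min-sum} solution (and the decision versions correspond under $R \leftrightarrow T - R$); (iv) observe that the weights $w(\{x,y\}) = d(x,y)$ are polynomially bounded in the input bit-length, so this is a genuine polynomial-time many-one reduction and, combined with Williams' algorithm \cite{WILLIAMS2005357} for \textit{Max-Cut}, gives the claimed $1.7297^n\cdot\poly(n,d)$ runtime for \textit{2-min-sum}.

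I do not expect a serious obstacle here; the only thing to be careful about is bookkeeping for the empty/singleton clusters (the within-sums are $0$ for those, and the identity still holds), and the (minor) point that Williams' algorithm is stated for integer weights, so if $d(x,y)$ is not already integral one should first scale all distances by a common denominator, which only multiplies the weights by a polynomially bounded factor and hence preserves both correctness and the running time. The substantive content of the lemma is really just the inclusion–exclusion identity in step (ii), so the proof will be short.
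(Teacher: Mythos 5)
Your proposal is correct and is essentially the paper's own proof: the paper also builds the complete graph on the point set with edge weights $w(\{v_p,v_{p'}\})=d(p,p')$ and observes that the within-cluster cost and the cut weight sum to the constant total of all pairwise distances, so minimizing 2-min-sum is the same as maximizing the cut. Your extra remarks on the decision-version threshold and integer scaling are harmless additions, not a different route.
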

\begin{proof}
Given a \textit{2-min-sum} instance $X \subset (M, d)$, we will construct a \textit{Max-Cut} instance $G=(V, E)$ with weight function $w$. Let $n$ be the number of points in $X$. For every point $p \in X$, we will add a vertex $v_p$ to $V$. We will have an edge between every pair of vertices $v_p, v_{p'}$ (that is, $G$ is an $n$-clique), and the weight of the edge is equal to distance between the points $p$ and $p'$: $w(\{v_p, v_{p'}\}) = d(p, p')$.

\textit{Correctness of the reduction}: For any 2-partitioning of the vertex set $V = V_1 \dot\cup V_2$, observe that the weight of any edge would either contribute to \textit{Max-Cut} cost or \textit{2-min-sum} cost. To elaborate, the weight of the edge contributes to the \textit{2-min-sum} cost if both the vertices of the edge are in either $V_1$ or $V_2$ and to \textit{Max-Cut} cost otherwise. Since the total sum of weights of all the edges in a given instance is constant, \textit{2-min-sum} cost and \textit{Max-Cut} cost sums to a constant. Therefore, maximizing the \textit{Max-Cut} cost is the same task as minimizing the \textit{2-min-sum} cost.
\end{proof}
\subsection*{Acknowledgements}
We would like to thank Pasin Manurangsi for pointing us to \cite{bjorklund2007fourier} and informing us that the fast max-sum convolution result of that paper can be used to obtain a $2^{n}\cdot \poly(n,d)$ runtime algorithm for the $k$-means problem.  Also, we would like to thank Vincent Cohen-Addad for suggesting to us the that $2$-min-sum problem might be more naturally connected to the Max-Cut problem. Finally, we would like to thank the anonymous reviewers for helping us improve the presentation of the paper. 

\bibliographystyle{alpha}
\bibliography{refs.bib}

\appendix

\section{NP-hardness of Balanced Max-Cut}
\label{sec:maxcut}

In this section, we prove that the Balanced Max-Cut problem is NP-hard using a particular NP-hard variant of the not-all-equal 3-SAT (abbreviated to ``NAE-3-SAT'') problem. For convenience, we write clauses in an instance of 3-SAT as sets of literals, and then consider the instance as a collection of subsets of literals. 

\paragraph{Notations.}

Given a graph $G = (V,E)$ and a partition $V = V_0 \dot\cup V_1$, the \textit{cut} of $G$ created by $(V_0,V_1)$ is a subset $E(V_0, V_1) \subseteq E$ of edges with one end-point in $V_0$ and the other in $V_1$. An edge $e$ is \textit{good} with respect to a cut $E(V_0,V_1)$ if $e \in E(V_0,V_1)$, and called \textit{bad} otherwise. We denote the number of bad edges by $\beta(V_0,V_1) := |E| - |E(V_0,V_1)|$.

Given $n \in \mathbb{Z}_{\geq 1}$ variables $x_1, \dots, x_n$, a \textit{literal} is an element of the set $X  := \{x_{i,a} : i \in [n], a \in \{0,1\}\}$. The literal $x_{i,0}$ will represent the variable $x_i$, and $x_{i,1}$ represents the negation of  $x_i$, i.e., $\overline{x_i}$.

A \textit{clause} over $X$ is a subset $C \subseteq X$ of literals, and a \textit{CNF} over $X$ is a collection $\Phi = \{C_1, \dots, C_m\}$ where each $C_j$ is a clause over $n$ variables. An \textit{assignment} is a function $f \colon X \to \{0,1\}$ such that $f(x_{i,0}) = 1 - f(x_{i,1})$ for each $i \in [n]$.

 To prove the NP-hardness of Balanced Max-Cut, we give a reduction from Linear 4-Regular NAE-3-SAT, which we define as follows. 

\paragraph{Linear 4-Regular NAE-3-SAT.} The input is an integer $n \in \mathbb{Z}_{\geq 1}$ and a CNF $\Phi = \{C_1, \dots, C_m\}$ over $X$ that satisfies

\begin{itemize}
    \item (3-uniform) $|C_j| = 3$ for each $j \in [m]$,
    \item (Linear) $|C_j \cap C_k| \leq 1$ for all $j,k \in [m]$ with $j \neq k$, 
    \item (4-regular) For each $i \in [n]$, the set $\{j \in [m] : |C_j \cap \{x_{i,0}, x_{i,1}\}| > 0\}$ has cardinality 4.
\end{itemize}
The problem outputs \textit{YES} if there exists an assignment $f \colon X \to \{0, 1\}$ with $f(C_j) = \{0,1\}$ for each $j \in [m]$, and \textit{NO} otherwise.

Essentially, $f$ as above does not assign every literal in the same clause with the same value; this is the ``not-all-equals'' part of the problem above.

We call an instance $\Phi$ of Linear 4-Regular NAE-3-SAT \textit{nae-satisfiable} if there is such an assignment. This problem has been shown to be NP-hard by \cite{DarDoc20}. We focus now on the main result of this section.

\subsection{Balanced Max-Cut is NP-hard}

Our goal will be to construct a graph $G = (V,E)$ from an instance $\Phi$ of Linear 4-Regular NAE-3-SAT by duplicating literals and connecting them by edges in a special way depending on the clauses that contain them. 

Fix the number of variables $n \in \mathbb{Z}_{\geq 1}$ and let $\Phi := \{C_1, \dots, C_m\}$ be an instance of Linear 4-Regular NAE-3-SAT. We construct a graph $G = (V,E)$, depending on $\Phi$. Let the vertex set $V$ of $G$ be     
\[
    V := X \times [4] = \{(x_{i,a}, k) : x_{i,a} \in X, k \in [4]\},
\]
i.e., for every literal in $X$, $G$ will contain 4 copies of that literal as a vertex.

For each clause $C_j = \{x_{j_1,a_1}, x_{j_2, a_2}, x_{j_3,a_3}\}$, introduce an edge between each pair of vertices $(x_{j_1, a_1}, k)$, $(x_{j_2, a_2}, k)$, $(x_{j_3, a_3}, k) \in V$, for each $k \in [4]$. This constructs four disjoint 3-cycles in $G$ for $C_j$; denote the $k$th one created from $C_j$ by
\[
    A_{j,k}^0 := \Big\{\{(x_{j_1, a_1}, k), (x_{j_2, a_2}, k)\}, \{(x_{j_2, a_2}, k), (x_{j_3, a_3}, k)\}, \{(x_{j_3, a_3}, k), (x_{j_1, a_1}, k)\} \Big\}.
\]
In addition, we also place the same edge relations in $G$ where we replace $a_1,a_2,a_3$ with $1-a_1,1-a_2,1-a_3$, respectively. The edges we have added are those in the set
\[
    A_{j,k}^1 := \Big\{\{(x_{j_1, 1-a_1}, k), (x_{j_2, 1-a_2}, k)\}, \{(x_{j_2, 1-a_2}, k), (x_{j_3, 1-a_3}, k)\}, \{(x_{j_3, 1-a_3}, k), (x_{j_1, 1-a_1}, k)\} \Big\}.
\]

 Finally, for simplicity, define $A_{j,k} := A_{j,k}^0 \cup A_{j,k}^1$, which represents two disjoint triangles in $G$.

Next, for each $i \in [n]$, insert an undirected edge between $(x_{i,0}, k) \in V$ and $(x_{i,1}, \ell) \in V$, for each $k,l \in [4]$. This constructs a copy of $K_{4,4}$ in $G$ for each $i \in [n]$; denote the edge set of this copy of $K_{4,4}$ by
\[
    B_i := \Big\{\{(x_{i,0}, k), (x_{i,1}, \ell)\} : k,\ell \in [4] \Big\}.
\]
 Formally, we've constructed the graph $G = (V,E)$ with edge set
\begin{equation} \label{eq:edgeset}
    E := \bigcup_{j=1}^m \bigcup_{k=1}^4 A_{j,k} \cup \bigcup_{i=1}^n B_i.
\end{equation}
\indent Intuitively, for each clause $C_j$, $G$ contains a triangle between the $k$th copies of the variables in $C_j$, and between their negations. In total, we have 8 disjoint triangles for each clause $C_j$. In addition, we connect copies of the variable $x_{i,0}$ with copies of the variable $x_{i,1}$ by a copy of $K_{4,4}$.

Note that $G$ is simple (no multiedges) since $\Phi$ is linear, and $G$ is 12-regular: each $(x_{i,a}, k) \in V$ has 
\[
    \deg_G\big((x_{i,a}, k) \big) = 12,
\]
since $(x_{i,a}, k)$ is incident to 2 edges for each of the 4 clauses that either $x_{i,0}$ or $x_{i,1}$ are in, and $(x_{i,a}, k)$ is connected to $(x_{i,1-a}, \ell)$ for $\ell = 1,\dots, 4$.

For any cut $V = V_0 \dot\cup V_1$, let $a_{j,k}(V_0,V_1)$ and $b_i(V_0,V_1)$ be the number of bad edges in $A_{j,k}, B_i$, respectively, under the cut $(V_0,V_1)$. The union in (\ref{eq:edgeset}) is actually a disjoint union, so
\begin{equation} \label{eq:beta}
    \beta(V_0, V_1) = \left(\sum_{j=1}^m \sum_{k=1}^4 a_{j,k}(V_0,V_1)\right) + \left(\sum_{i=1}^n b_i(V_0,V_1)\right).
\end{equation}
(For the ease of presentation, we will write $\sum_{j,k}$ and $\sum_i$ to represent the sum notations above.)

The next result gives a useful bound for $\beta$ given the above. Before continuing, observe that the vertices involved in $A_{j,k}$ and $B_i$ are
\[
    V(A_{j,k}) = \{(x_{i,a}, k) : \{x_{i,0}, x_{i,1}\} \cap C_j \neq \varnothing \} \quad \text{and} \quad V(B_i) = \{x_{i,0}, x_{i,1}\} \times [4],
\]
 correspondingly.

\begin{lemma} \label{lm:bad-ct}
For any cut $(V_0,V_1)$ of $G$,
\begin{equation} \label{eq:beta-bnd}
    \beta(V_0, V_1) \geq 8m + 2\big||V_0| - |V_1|\big|,
\end{equation}
 and, if $|V_0| = |V_1|$, equality occurs if and only if $a_{jk}(V_0, V_1) = 2$ and $b_i(V_0,V_1) = 0$ for each $i,j,k$.
\end{lemma}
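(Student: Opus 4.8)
The plan is to analyze the bad-edge count by decomposing it over the disjoint pieces $A_{j,k}$ and $B_i$ as in \eqref{eq:beta}, and to handle the triangle contributions and the $K_{4,4}$ contributions separately. First I would record the trivial fact that any 2-coloring of a triangle (here, of the two disjoint triangles comprising $A_{j,k}$, for a total of $6$ vertices and $6$ edges in two odd cycles) must leave at least one bad edge per triangle, so $a_{j,k}(V_0,V_1)\geq 2$ for every $j,k$, with equality exactly when each of the two triangles is split $2$–$1$. Summing over the $4m$ pairs $(j,k)$ gives $\sum_{j,k} a_{j,k}(V_0,V_1)\geq 8m$. This already yields the ``$8m$'' part of the bound and shows that to get equality in \eqref{eq:beta-bnd} (when $|V_0|=|V_1|$) we need $a_{j,k}=2$ for all $j,k$ and $\sum_i b_i(V_0,V_1)=0$.

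Next I would prove the key inequality $\sum_i b_i(V_0,V_1)\geq 2\big||V_0|-|V_1|\big|$, which is the heart of the lemma and, I expect, the main obstacle. The idea is local: fix $i\in[n]$ and let the copy of $K_{4,4}$ on vertex classes $L_i:=\{x_{i,0}\}\times[4]$ and $R_i:=\{x_{i,1}\}\times[4]$ have $p$ of its $4$ left-vertices in $V_0$ (so $4-p$ in $V_1$) and $q$ of its $4$ right-vertices in $V_0$. A direct count shows the number of bad edges in $B_i$ is $pq+(4-p)(4-q)$, while the contribution of these $8$ vertices to $|V_0|-|V_1|$ is $(p+q)-(4-p-q)=2(p+q)-8$. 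I would then verify the pointwise inequality $pq+(4-p)(4-q)\geq 2\,|2(p+q)-8|=4\,|p+q-4|$ for all integers $p,q\in\{0,1,2,3,4\}$ — a finite check, but I would present the clean algebraic reason: writing $s=p+q-4\in\{-4,\dots,4\}$ and $\delta=p-q$, one has $pq+(4-p)(4-q)=\tfrac12\big(s^2+16\big)-\tfrac12\delta^2$ hmm, let me instead just note $pq+(4-p)(4-q) = 2s^2 + \text{(nonneg correction)}$... The robust way: $pq+(4-p)(4-q)\ge$ its value when $p,q$ are as balanced as possible given $p+q$ fixed, and in all cases one checks $\ge 4|p+q-4|$ directly. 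Summing this local inequality over all $i\in[n]$, and observing that $\sum_i\big(2(p_i+q_i)-8\big)=|V_0|-|V_1|$ (since the $B_i$ vertex classes partition $V$), I would like to conclude $\sum_i b_i\geq 2\big|\sum_i(\text{local discrepancy})\big|$; here one must be slightly careful, because $\sum|x_i|\ge|\sum x_i|$ gives $\sum_i b_i\ge \sum_i 4|p_i+q_i-4|\ge 4\big|\sum_i(p_i+q_i-4)\big| = 2\big||V_0|-|V_1|\big|$. That chain works.

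Combining the two bounds via \eqref{eq:beta} gives $\beta(V_0,V_1)=\sum_{j,k}a_{j,k}+\sum_i b_i\geq 8m+2\big||V_0|-|V_1|\big|$, proving \eqref{eq:beta-bnd}. For the equality characterization, suppose $|V_0|=|V_1|$ and $\beta(V_0,V_1)=8m$. Since $\sum_{j,k}a_{j,k}\ge 8m$ and $\sum_i b_i\ge 0$, equality forces $a_{j,k}(V_0,V_1)=2$ for every $j,k$ and $\sum_i b_i(V_0,V_1)=0$, hence $b_i(V_0,V_1)=0$ for every $i$; conversely if all $a_{j,k}=2$ and all $b_i=0$ then \eqref{eq:beta} gives $\beta=8m=8m+2\big||V_0|-|V_1|\big|$. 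I would remark that the local analysis also shows $b_i=0$ is equivalent to $\{p_i,q_i\}=\{0,4\}$ or $\{4,0\}$ — i.e., the $K_{4,4}$ is cut perfectly, all of $L_i$ on one side and all of $R_i$ on the other — which is the intuition that forces $x_{i,0}$ and $x_{i,1}$ copies to receive opposite colors, but stating the equality condition as in the lemma suffices. The only genuinely delicate point is the finite case-check for the $K_{4,4}$ inequality and making sure the triangle-inequality step $\sum|x_i|\ge|\sum x_i|$ is applied in the right direction; everything else is bookkeeping over the disjoint decomposition \eqref{eq:beta}.
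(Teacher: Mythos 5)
Your proposal is correct and follows essentially the same route as the paper's proof: decompose $\beta$ via \eqref{eq:beta}, bound each $a_{j,k}$ below by $2$ (one bad edge per triangle), bound each $b_i$ below by twice the local imbalance of its $K_{4,4}$, and combine with $\sum_i |x_i| \ge \bigl|\sum_i x_i\bigr|$, with the equality case read off when $|V_0|=|V_1|$. The only difference is that you make explicit the $K_{4,4}$ casework (the check $pq+(4-p)(4-q)\ge 4|p+q-4|$) that the paper dispatches with ``these can be shown through casework.''
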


\begin{proof}
Any cut of a triangle has either 1 or 3 bad edges, and any cut $(V_0',V_1')$ of $K_{4,4}$ has at least $2\big||V_0'| - |V_1'|\big|$ bad edges (these can be shown through casework). Accordingly, for any cut $(V_0,V_1)$ of $G$,
\[
    a_{j,k}(V_0,V_1) \in \{2,4,6\} \quad \text{and} \quad b_i(V_0,V_1) \leq 2\big||V_0^i| - |V_1^i| \big|,
\]
where $V_p^i = V_p \cap V(B_i)$ for $p = 0,1$. Note that $V_p^1, \dots, V_p^n$ partitions $V_p$, so that $|V_p| = |V_p^1| + \cdots + |V_p^n|$. Hence, from formula (\ref{eq:beta}) for $\beta(V_0,V_1)$,
\[
    \beta(V_0,V_1) \geq \sum_{j,k} 2 + \sum_{i} \left(2\big||V_0^i| - |V_1^i| \big|\right) \geq 8m + 2\big||V_0| - |V_1| \big|.
\]
When $|V_0| = |V_1|$, equality occurs exactly when $\beta(V_0,V_1) = 8m$, which is only possible when each $a_{j,k}(V_0,V_1) = 2$ and each $b_i(V_0,V_1) = 0$.
\end{proof}

Now we can prove Theorem~\ref{balancedMaxCut:hardness}.

\begin{proof}[Proof of Theorem~\ref{balancedMaxCut:hardness}]
We prove below the completeness and soundness of the reduction detailed above. 

\paragraph{Completeness.} Suppose $f \colon X \to \{0,1\}$ nae-satisfies $\Phi$. For every $p \in \{0,1\}$ we have,
\[
    V_p = \{(x_{i,a}, k) \in V : f(x_{i,a}) = p\}.
\]
By definition of an assignment $f$, if $(x_{i,0}, k) \in V_p$, then $(x_{i,1}, k) \in V_{1-p}$. Hence, $|V_0| = |V_1| = |V|/2$. We also obtain that each $\{(x_{i,0}, k), (x_{i,1}, \ell)\} \in B_i$ is not a bad edge, so $b_i(V_0,V_1) = 0$ for each $i \in [n]$.

Since $f$ nae-satisfies $\Phi$, the image $f(C_j) = \{0,1\}$ for each $j \in [m]$. Correspondingly, this implies that only one of the three edges in $A_{j,k}^0$ is a bad edge, and similarly for $A_{j,k}^1$. Equivalently, $a_{j,k}(V_0,V_1) = 2$ for each $j\in[m],k\in[4]$.
From Lemma \ref{lm:bad-ct} for this choice of $V_0,V_1$, we have $\beta(V_0,V_1) = 8m$.

\paragraph{Soundness.} Our proof is by contradiction. Suppose there is some  2-partition $V:=V_0\dot\cup V_1$ such that $\beta(V_0,V_1)\le 8m + \frac{t}{|V|}\cdot ||V_0|-|V_1||$, where $t=8m$ and $|V|=8n$, and since $\Phi$ is 4-Regular NAE-3-SAT formula, we have $m/n=4/3$. By applying Lemma \ref{lm:bad-ct} over this partition $(V_0,V_1)$, we obtain that $|V_0|=|V_1|$. Then, we have that  $\beta(V_0,V_1) = 8m$ and that 
\[a_{j,k}(V_0,V_1) = 2 \quad \text{and} \quad b_i(V_0,V_1) = 0,\]
for each $i \in [n], j \in [m], k \in [4]$. Since $b_i(V_0,V_1)=0$, it must be the case that, for each $i \in [n]$ and each $k,\ell \in [4]$, the vertices $(x_{i,0}, k)$ and $(x_{i,1}, \ell)$ are in different parts of the partition $V_0 \dot\cup V_1 = V$.

Writing $X_{i,a} = \{x_{i,a}\} \times [4]$, observe that the sets $X_{i,0} := \{x_{i,0}\} \times [4]$ and $X_{i,1} := \{x_{i,1}\} \times [4]$ are contained in different parts of the partition, that is, either
\[
    X_{i,0} \subseteq V_0, X_{i,1} \subseteq V_1 \quad \text{or} \quad X_{i,0} \subseteq V_1, X_{i,1} \subseteq V_0.
\]

Denote the map $f \colon X \to \{0,1\}$ by
\[
    f(x_{i,a}) = \begin{cases}
        0 & X_{i,a} \subseteq V_0, \\
        1 & X_{i,a} \subseteq V_1.
    \end{cases}
\]
$f$ is well-defined and $f(x_{i,0}) = 1 - f(x_{i,1})$, since $X_{i,0},X_{i,1}$ are contained in different parts of the partition.

Finally, if $a_{j,k} = 2$, then exactly one of the edges in $A_{j,k}^0$ is bad, and similarly for $A_{j,k}^1$. Correspondingly, this implies that the image set $f(C_j) = \{0,1\}$.
Therefore, $f$ nae-satisfies $\Phi$.
\end{proof}






\end{document}